\documentclass{lmcs}
\pdfoutput=1
\usepackage[utf8]{inputenc}

\usepackage{lastpage}
\lmcsdoi{22}{2}{5}
\lmcsheading{}{\pageref{LastPage}}{}{}%
{Dec.~20,~2024}{Apr.~16,~2026}{}

\keywords{AuDaLa \and Verification \and Expressiveness}

\usepackage{hyperref, todonotes}
\theoremstyle{plain}

\usepackage{amssymb, stmaryrd, semantic, listings, array, xspace, cite, xcolor, thmtools}
\newcommand{\highlight}[1]{%
	\colorbox{teal!30}{$\displaystyle#1$}}

\usepackage[export]{adjustbox}
\usepackage{arydshln}
\newcommand{\type}[1]{\langle \mathit{#1}\rangle}
\newcommand{\word}[1]{\text{`\texttt{#1}'}}
\newcommand{\etal}{\emph{et al.}}
\newcommand{\Lname}{AuDaLa\xspace}
\newcommand{\rarr}{\rightarrow}
\newcommand{\Rarr}{\Rightarrow}

\newcommand{\vempty}[0]{\varepsilon}
\DeclareMathSymbol{\sm}{\mathbin}{AMSa}{"39}
\newcommand{\nil}[0]{\mathit{null}}

\newcommand{\Sched}[0]{\mathit{Sc}}
\newcommand{\Structs}[0]{\sigma}

\newcommand{\Env}[0]{\xi}
\newcommand{\Labels}[0]{\mathcal{L}}

\newcommand{\Values}[0]{\mathcal{V}}
\newcommand{\Id}{\mathit{ID}}
\newcommand{\getValue}{\mathit{val}}
\newcommand{\defaultVal}{\mathit{defaultVal}}

\newcommand{\StateSpace}[0]{S_\mathcal{G}}
\newcommand{\Literals}[0]{\mathit{LT}}

\newcommand{\Program}[0]{\mathcal{P}}

\newcommand{\String}[0]{\mathit{String}}

\newcommand{\Expr}[0]{\mathit{E}}

\newcommand{\Nat}[0]{\mathbb{N}}

\newcommand{\Int}[0]{\mathbb{Z}}

\newcommand{\Bool}[0]{\mathbb{B}}

\newcommand{\StructTypes}[0]{\Theta}
\newcommand{\StructType}[0]{\theta}
\newcommand{\SynTypes}[0]{\mathcal{T}}
\newcommand{\SynType}[0]{\mathit{T}}
\newcommand{\interp}[1]{\llbracket #1 \rrbracket}
\newcommand{\true}[0]{\mathit{true}}
\newcommand{\false}[0]{\mathit{false}}
\newcommand{\Stab}[0]{\mathit{s\chi}}
\newcommand{\Stack}[0]{\chi}
\newcommand{\readg}[1]{\textbf{rd}(#1)}

\newcommand{\arr}[0]{\textbf{arr}}
\newcommand{\asize}[0]{\textbf{asize}}
\newcommand{\writev}[1]{\textbf{wr}(#1)}
\newcommand{\readA}[0]{\textbf{rdA}}
\newcommand{\writeA}[0]{\textbf{wrA}}
\newcommand{\cons}[1]{\textbf{cons}(#1)}
\newcommand{\push}[1]{\textbf{push}(#1)}

\newcommand{\Stat}[0]{\mathit{ST}}
\newcommand{\Commands}[0]{\mathcal{C}}
\newcommand{\Operator}[0]{\textbf{op}}
\newcommand{\Ifc}[1]{\textbf{if}(#1)}
\newcommand{\Notc}[0]{\textbf{not}}
\newcommand{\this}[0]{\textbf{this}}
\newcommand{\CList}[0]{\gamma}
\newcommand{\StatList}[0]{S}
\newcommand{\ComList}[0]{\gamma}
\newcommand{\StateP}[0]{\mathit{P}}
\newcommand{\Par}{\mathit{Par}}
\newcommand{\nilL}[1]{\ell^0_{#1}}
\newcommand{\SynOp}[0]{O}

\newcommand{\val}[1]{\mathit{val}(#1)}
\newcommand{\impl}{\mathcal{P}_{\mathit{(T, Z)}}}
\newcommand{\sr}[0]{\Rarr_\Program}

\def\CC{{C\nolinebreak[4]\hspace{-.05em}\raisebox{.4ex}{\tiny\bf ++}}}

\lstdefinelanguage{ADL}{
	keywords = [1]{struct, int, bool, nat, String, if, then, Fix, Iter, else, Array},
	morekeywords=[2]{null, this, true, false},
	sensitive=false, 
	morecomment=[l]{//},
	morecomment=[s]{/*}{*/},
	morestring=[b]"
}

\lstdefinelanguage{gmcommands}{
	morekeywords = {push, rd, wr, cons, if, not, op, this, Fix},
	sensitive=true, 
	morecomment=[l]{//},
	morecomment=[s]{/*}{*/},
	morestring=[b]"
}

\lstset{
	language={ADL},
	basicstyle=\small,
	commentstyle=\color{gray},
	keywordstyle=[1]\color{teal},
	keywordstyle=[2]\color{teal},
	numberstyle=\tiny\color{gray},
	stringstyle=\slshape,
	frame=single,
	breakatwhitespace=false,         
	breaklines=true,                 
	captionpos=b,                    
	keepspaces=true,                 
	numbers=left,                    
	numbersep=5pt,                  
	showspaces=false,                
	showstringspaces=false,
	showtabs=false,                  
	tabsize=2,
	columns=fullflexible,
	escapeinside={(*}{*)},
	mathescape=true
}

\lstdefinestyle{gmcommands}{
	language={gmcommands},
	showstringspaces=false,
	basicstyle=\small,
	keywordstyle=\textbf,
	numberstyle=\tiny\color{codegray},
	stringstyle=\slshape,
	commentstyle=\color{codegray},
	emph={
		val
	},
	emphstyle = \rmfamily\itshape,
	breaklines=true
}

\begin{document}

\title[Expressivity of AuDaLa]{Expressivity of AuDaLa: Turing Completeness and Possible Extensions}
\titlecomment{Extension of `AuDaLa is Turing Complete' as published in the proceedings of FORTE 2024~\cite{franken-audala-2024}}
\author[T.T.P.~Franken]{Tom T.P. Franken\lmcsorcid{0000-0002-1168-5450}}
\author[T.~Neele]{Thomas Neele\lmcsorcid{0000-0001-6117-9129}}
\address{Eindhoven University of Technology, The Netherlands}
\email{\{t.t.p.franken, t.s.neele\}@tue.nl}  

\begin{abstract}
	AuDaLa is a recently introduced programming language that follows the new data autonomous paradigm.
	In this paradigm, small pieces of data execute functions autonomously.
	Considering the paradigm and the design choices of AuDaLa, it is interesting to determine the expressivity of the language. 
	In this paper, we implement Turing machines in AuDaLa and prove that implementation correct. 
	This proves that AuDaLa is Turing complete, giving an initial indication of AuDaLa's expressivity.
	Additionally, we give examples of how to add extensions to AuDaLa to increase its practical expressivity and to better match conventional parallel languages, allowing for a more straightforward and performant implementation of algorithms.
\end{abstract}

\maketitle

\section{Introduction}
Nowadays, performance gains are increasingly obtained through parallelism. 
To make use of this, there are many developments in how to get the hardware to process the program efficiently.
Languages are often designed around that, focusing on threads and processes. 
Recently, \Lname~\cite{franken-autonomous-2023, franken-autonomous-2025} was introduced, which completely abstracts away from threads.
In \Lname, data is \emph{autonomous}, meaning that the data executes its own functions.
It follows the new data autonomous paradigm~\cite{franken-autonomous-2023, franken-autonomous-2025}, which abstracts away from active processor and memory management for parallel programming and instead focuses on the innate parallelism of data. 
This paradigm encourages parallelism by making running code in parallel the default setting, instead of requiring functions to be explicitly called in parallel. 
The paradigm also promotes separation of concerns and a bottom-up design process.

The simplicty, structure and focus on data and understandable programs relates \Lname to domain specific languages and brings up the question to which extent \Lname is generally applicable.
It is therefore interesting to establish how applicable and expressive \Lname is, as \Lname is meant to be general-purpose. 

The applicability of \Lname has been studied in other work by creating a compiler~\cite{leemrijse2023} that can translate \Lname programs to CUDA~\cite{garland-parallel-2008}.
It proved beneficial to base the compiler on a new \emph{weak-memory semantics}~\cite{leemrijse-formalisation-2025} of \Lname.
We furthermore formalised the relation of the weak-memory semantics with the original semantics of~\cite{franken-autonomous-2023}.
The performance of \Lname programs compiled under the weak-memory semantics was sufficient for us to consider \Lname fit for parallel programming in practice.

Then, to establish the expressivity of \Lname, we used the original semantics to establish that \Lname can simulate every Turing machine~\cite{franken-audala-2024}. 
It follows that \Lname is \emph{Turing complete} and can therefore compute all effectively computable functions following the Church-Turing thesis~\cite{copeland-church-turing-1997}. 
This also gives an indication for the theoretical expressivity of the data-autonomous paradigm in general.

The current paper extends our paper from FORTE~2024~\cite{franken-audala-2024} in the following ways:
\begin{itemize}
	\item We provide a self-contained account of \Lname's semantics, including a running example, and recap a few results from~\cite{franken-autonomous-2025} required for our proofs.
	\item We give the full proof for the Turing completeness of \Lname, extending parts of the proof which were only sketched before.
	\item We explore of the \emph{practical expressivity} of \Lname: the extent to which \Lname enables the expression of the user's intentions in the program directly.
	We do this by providing three possible extensions to \Lname, including the necessary additions to the syntax and semantics.
	Two of the extensions deal with extended looping mechanisms and the other implements arrays in \Lname.	
	\item We discuss the adaptations, some of the design choices and other adaptations that are worth considering in the future.
\end{itemize} 

We remark that for all results, we use the original, sequentially consistent semantics~\cite{franken-autonomous-2023}. 
This semantics has been proven equivalent to the weak-memory semantics~\cite{leemrijse-formalisation-2025} when no read-write race conditions occur, which is sufficient for the Turing completeness proof in this paper. 
Though we only define the semantic adaptations of this paper for the original semantics of \Lname, extending them to the weak-memory semantics is reasonably straightforward.


\paragraph{Overview.}
In the current paper, we first give a recap of Turing machines and \Lname in Section~\ref{sec:preliminaries}, followed by the implementation of a Turing machine in \Lname in Section~\ref{sec: TMADL}. We then give some properties of \Lname and some important corollaries for proving programs in \Lname in Section~\ref{sec:prop}, which is based on \cite[Section 7]{franken-autonomous-2025}. These are used to prove that \Lname is Turing complete in Section~\ref{sec: well-formed}.
Then, in Section~\ref{sec: adapt}, we propose three extensions to \Lname.
We conclude in Section~\ref{sec: Con}.

\paragraph{Related Work.}
\Lname is a \emph{data-autonomous} language and related to other data-focused languages, like standard data-parallel languages (CUDA~\cite{garland-parallel-2008} and OpenCL~\cite{chong-sound-2014}), languages which apply local parallel operations on data structures (Halide~\cite{ragan-kelley-halide-2017},\\\textsc{ReLaCS}~\cite{raimbault-relacs-1993}) and actor-based languages (Ly~\cite{ungar-harnessing-2010}, A-NETL~\cite{baba--netl-1995}).

Though the expressivity of actor languages has been studied before~\cite{de-boer-decidability-2012} and there is research into suitable Turing machine-like models for concurrency~\cite{qu-parallel-2017, kozen-parallelism-1976, wiedermann-parallel-1984}, there does not seem to be a large focus on proving Turing completeness of parallel languages.
This can have multiple reasons. 
One of these reasons can be that many of these languages extend other languages, e.g., CUDA and OpenCL are built upon \CC. 
For these languages, Turing completeness is inherited from their base language.
Other languages, for example domain specific languages like Halide~\cite{ragan-kelley-halide-2017}, are simple by design, and if the functionality is more important than the applicability, Turing completeness may be of lesser importance. Some languages are also not Turing complete on purpose~\cite{gibbons-functional-2015, deursen-little-1998}, for example to make automated verification decidable.
As \Lname is not an extension of a sequential language, Turing completeness for \Lname is not automatic. 
It has been noted that Turing completeness also does not establish as much about expressivity as it does in the domain of sequential languages, as Turing completeness does not account for the effects of concurrent and distributive operations, which can lead to differences in behaviour in two Turing complete concurrent languages~\cite{di2009expressiveness}.
Though we agree with this, the fact stands that the full behaviour of a language is a superset of the behaviour of the language in the sequential setting, so establishing \Lname to be Turing complete is a good starting point for analysing the expressivity of \Lname. 
By proving \Lname to be Turing complete, we furthermore establish that even though its simple design and rigid structure may make it seem like \Lname is simple enough to be decidable or domain specific, in reality, \Lname is no less complex than other Turing complete, general languages.

Our proof follows the same line as the proof for the Turing completeness of Circal~\cite{detrey-constructive-2002}.
Other parallel systems that have been proven Turing complete include water systems~\cite{henderson-turing-2021} and asynchronous non-camouflage cellular automata~\cite{yamashita-turing-2017}.

\section{Basic Concepts}
\label{sec:preliminaries}
In this section, we discuss the basics of Turing machines and \Lname.
\subsection{Turing Machines}\label{sec: TM}
We define a Turing machine following the definition of Hopcroft \etal~\cite{hopcroft-introduction-2001}.
Let $\mathbb{D} = \{L, R\}$ be the set of the two directions \textit{left} and \textit{right}. A Turing machine $T$ is a 7-tuple $T = (Q, q_0, F, \Gamma, \Sigma, B, \delta)$, with a finite set of control states $Q$, an initial state $q_0\in Q$, a set of accepting states $F\subseteq Q$, a set of tape symbols $\Gamma$, a finite set of input symbols $\Sigma\subseteq \Gamma$, a blank symbol $B\in \Gamma\setminus\Sigma$ (the initial symbol of all cells not initialized) and a partial transition function $\delta: Q \times \Gamma \nrightarrow Q \times \Gamma \times \mathbb{D}$.

Every Turing machine $T$ operates on an infinite \textit{tape} divided into \textit{cells}. 
Initially, this tape contains an input string $Z = z_0\ldots z_n$ with symbols from $\Sigma$, but is otherwise blank.
The cell the Turing machine operates on is called the \emph{head}. 
We represent the tape as a function $t:\mathbb{Z}\rarr\Gamma$, where cell $i$ contains symbol $t(i)\in\Gamma$. 
In this function, cell $0$ is the head, cells $i$ s.t. $i<0$ are the cells left from the head and cells $i$ s.t. $i>0$ are the cells right from the head.
We restrict ourselves to deterministic Turing machines.	
We also assume the input string is not empty, without loss of generality. 

We define a \emph{configuration} to be a tuple $(q, t)$, with $q$ the current state of the Turing machine and $t$ the current tape function.
Given input string $Z = z_0\ldots z_n$, the \emph{initial configuration} of a Turing machine $T$ is $(q_0, t_Z)$, with $q_0$ as defined for $T$, and $t_Z(i) = z_i$ for $0\leq i \leq n$ and $t_Z(i) = B$ otherwise.

During the execution, a Turing machine $T$ performs \emph{transitions}, defined as: 
\begin{defi}[Turing machine transition]\label{def: conf}
	Let $T = (Q, q_0, F, \Gamma, \Sigma, B, \delta)$ be a Turing machine with input string $Z$ and let $(q, t)$ be a configuration such that $\delta(q, t(0)) = (q', z', D)$, with $D\in \mathbb{D}$. Then $(q, t)\rarr (q', t')$, where $t'$ is defined as
	\[
		t'(i) = 
		\left\{\begin{array}{ll}
			z'&\text{if } i = 1\\
			t(i-1)\;&\text{otherwise}
		\end{array}\right.\text{ if $D{=}L$ and }\\
		t'(i) = 
		\left\{\begin{array}{ll}
			z'&\text{if } i = -1\\
			t(i+1)\;&\text{otherwise}
		\end{array}\right.\text{ if $D{=}R$.}
	\]
\end{defi}

We say a Turing machine $T$ \emph{accepts} a string $Z$ iff, starting from $(q_0, t_Z)$ and taking transitions while possible, $T$ halts in a configuration $(q, t)$ s.t. $q\in F$.
\subsection{\Lname}\label{sec:AuD}
In this section, we give a recap of \Lname concepts, program layout and semantics relevant to this paper. 
As an example throughout the text, we use Listing~\ref{lst: reach}, depicting a program for computing reachability on a small directed graph.
In essence, \Lname is a single instruction-multiple data programming language, which abstracts away from explicit memory management and thread management and puts a large focus on the structure of parallel program design.
By doing this, it keeps the design of the program modular and simple.

An \Lname program contains three parts, which are neatly separated in the code.
Firstly, the definitions of the data types and their parameters are expressed as \emph{structs}. 
In Listing~\ref{lst: reach}, we are computing something on a graph, so we define structs for nodes and edges, where edges have a source and target (given as the parameters \textit{in} and \textit{out}) and nodes can either be reachable or not reachable.
During execution, these structs will be instantiated to \textit{struct instances}, which represent the parallel data elements of the system.
For example, in the reachability example, every node in the graph is be represented by a struct instance of the struct \textit{Node}.
The second part are \textit{steps}, which are defined in context of a struct.
These are functions to be executed in parallel.
The steps are \emph{simple}, in the sense that they cannot contain loops.
For example, in Listing~\ref{lst: reach}, the struct \textit{Edge} contains two steps; one for propagating the reachability property, and the other for initialization.
Lastly, a \emph{schedule} separate from the data system dictates in which order steps are executed, and which loops exist during execution.
The looping mechanism used in \Lname is called a \emph{fixpoint} loop, which repeatedly executes the embedded schedule until the entire system is \emph{stable}.
This is the case when in the last iteration of the loop, no new struct instances have been created and no parameters have been changed.
For example, the schedule of Listing~\ref{lst: reach} first calls the initialization step once, before calling a fixpoint loop over the \textit{reachability} step. 
This loop executes the reachability step until the entire system is stable, which in this case means that no more nodes are set to reachable.
This is the case when nodes one to four are set to reachable, as node five is not reachable.

\begin{lstlisting}[float=t, caption={\Lname code for a reachability program on a small graph}, label={lst: reach}]
struct Node (reach: Bool) {}

struct Edge (in: Node, out: Node) {
	reachability {
		if (in.reach = true) then { 
			out.reach := true;
		}
	}
	init {
		Node node1 := Node(true);
		Node node2 := Node(false);
		Node node3 := Node(false);
		Node node4 := Node(false);
		Node node5 := Node(false);
		
		Edge edge12 := Edge(node1, node2);
		Edge edge13 := Edge(node1, node3);
		Edge edge23 := Edge(node2, node3);
		Edge edge34 := Edge(node3, node4);
		Edge edge51 := Edge(node5, node1);
	} 
}

init < Fix(reachability)\end{lstlisting}

In the schedule, step calls and fixpoints are separated by \emph{barriers} ($<$), s.t. $A<B$ means that subschedule $A$ is executed and only when every struct instance is finished, $B$ is executed. 
Step calls can be global step calls or local step calls. 
For a step $F$, a global step call is denoted by the appearance of $F$ in the schedule, and will cause every struct instance of which the struct has a definition of $F$ to execute $F$ in parallel as defined for their struct. 
Local step calls are denoted as $\StructType.F$ for some struct type $\StructType$, and will cause only the struct instances of $\StructType$ to execute $F$. 

Steps consist of statements, which can be if-then statements, variable assignments (of the form $T\ x := \mathit{val}$ for some new variable $x$ of type $T$) variable updates (of the form $X := \mathit{val}$ for some (possibly referenced) variable $X$) and \emph{constructor statements}, which create new struct instances. 
In the statements, one can use most conventional expressions, like applying binary operators, using brackets and negation, refer to \textbf{null} and $\this$, create struct instances and introduce literals or access variables. 
Variables are accessed using references; variable $x.y$ returns the value of $y$ in the element reached through the reference $x$. 

Henceforth, we only consider \Lname program that are \emph{well-formed}.
Well-formed \Lname programs are well-typed and satisfy the following requirements:
\begin{enumerate}
	\item Identifiers may not be keywords.
	\item A step name is declared at most once within each struct definition.
	\item A parameter name is used at most once within each struct definition.
	\item Names of local variables do not overlap with parameter names of the surrounding struct definition.
	\item Variable assignment statements are only used to declare new local variables.
	\item A local variable is only used after its declaration in a variable assignment statement.
\end{enumerate}
Our example in Listing~\ref{lst: reach} is well-formed, as it is well-typed and fulfills all of the above conditions. A type system that formalises the requirements on well-formed programs can be found in~\cite[Section 4]{franken-autonomous-2025}.

\subsection{\Lname Semantics}
In this section, we explain the semantics of \Lname, using the reachability program of Listing~\ref{lst: reach} as a running example.
In the scope of this section, we denote that program as the program $\mathcal{R}$.
In the semantics, we often use lists of the form $a;\ldots;z$, with empty list $\vempty$.
Here, it is important to note that for any list $L$, $L = L;\vempty$.
We also use function updates: a function $f[x \mapsto y]$ is created from a function $f$, where $f[x\mapsto y](x) = y$ and $f[x\mapsto y](z) = f(z)$ for all $z\neq x$.
We allow the shorthand $f[\{a_1\mapsto b_1, a_2\mapsto b_2,\ldots\}]$ for multiple function updates where all left-hand sides are pairwise distinct, and use the indexed notation $f[a, i\mapsto b]$ if $f(a)$ is a tuple and we want to only change the element at index $i$ of that tuple.
We also assume that $\mathcal{R}$ is parsed and that we have its abstract syntax tree.
In general, this means that polymorphic elements in $\mathcal{R}$, like $42$ or $\nil$ are annotated with their type, as for example $42_\Int$, when $42$ is an integer, or $\nil_T$ for some type $T$.
We divide the explanation of the semantics in three parts: \emph{semantic concepts and sets}, \emph{commands and the state} and \emph{the transition relation}.

\paragraph{Semantic Concepts and Sets.}  
To reason about the syntax, the semantics uses a group of syntactic sets: $\Id$ for variable names and other names from the syntax, $\Literals$ for the set of literals, $\mathit{ST}$ for statements, $\Expr$ for expressions, and $O$, containing all syntactic binary operators. Though we allow all possible identifiers in the set $\Id$, we only use a specific subset. For $\mathcal{R}$, we therefore further refine the set $\Id$ to the sets $\mathit{Var}_\mathcal{R}$, containing variable names used in $\mathcal{R}$, $\Par_\mathcal{R}$, containing parameter names used in $\mathcal{R}$ and $\StructTypes_\mathcal{R}$, which contains all struct type names used in $\mathcal{R}$. Concretely, $\mathit{Var}_\mathcal{R} = \{\text{node1}, \ldots, \text{node5}, \text{edge12}, \ldots, \text{edge51}\}$, $\mathit{Par}_\mathcal{R} = \{\text{reach}, \text{in}, \text{out}\}$ and $\StructTypes_\mathcal{R} = \{\text{Node}, \text{Edge}\}$.

The set of all syntactic types is $\SynTypes = \{\texttt{Nat}, \texttt{Int}, \texttt{Bool}, \texttt{String}\}\cup \Id$, which has an associated set of semantic types $\mathbb{T} = \{\mathbb{N}, \mathbb{Z}, \mathbb{B}, \mathit{String}, \Labels\}$. 
Here, $\Labels$ is the set of \emph{labels}. In the \Lname semantics, at any time during the execution of $\mathcal{R}$ every existing struct instance has a unique identifying \emph{label} which is unique to that struct instance. The set $\Labels$ is assumed adequately large to label all potential struct instances that can exist during an execution of $\mathcal{R}$.

In the semantics of $\mathcal{R}$, we define that every single struct type, in this case $\mathit{Node}$ and $\mathit{Edge}$, has a \emph{null-instance}. This instance represents the default element of the struct type, and has two main purposes. Firstly, the null-instances are by definition \textit{stable}: their parameter values cannot be changed by writes. This makes it easier to design fixpoints, because writes to null-instances do not impact stability and reads from null-instances have a deterministic effect. Secondly, null-instances are able to execute functions just like normal instances, which is useful during initialization. In $\mathcal{R}$, the step \textit{init} will be executed by the null-instance of $\mathit{Edge}$, which exists from the initial state of the program. 

From the syntax of $\mathcal{R}$, it follows that $\Labels$ should contain at least 10 labels, to facilitate the two null-instances and the eight struct instances created by the null-instance of \textit{Edge} when executing \textit{init}. We also define the special subset $\Labels^0\subseteq \Labels$ to contain the \emph{null-labels}, which represent the null-instances.

The set of all values is $\Values = \Labels\sqcup \Int\sqcup \Bool \sqcup \String$, with $\sqcup$ denoting the disjoint union. Note that we consider $\Nat\subset \Int$. For a literal $g$, the semantic value is denoted $\val{g}$. We consider the semantic values of literals straightforwardly defined. For example, the semantic value of $\texttt{5}$ is $\val{\texttt{5}}$, which we assume to be $5$. The semantic value of an occurrence of $\texttt{this}$ in the syntax is defined as the label of the currently executing struct instance, and the semantic value of the null-labels is referred to by $\ell^0_\StructType$ for some struct type $\StructType$. In $\mathcal{R}$, the null-labels are therefore $\ell^0_\mathit{Node}$ and $\ell^0_\mathit{Edge}$. 
We also define a null-value for every syntactic type: $0$ for $\texttt{Nat}$ and $\texttt{Int}$, $\false$ for $\texttt{Bool}$, $\vempty$, the empty string, for $\texttt{String}$ and $\ell^0_T$ for if $T\in\Id$, with $T$ the syntactic type. These null-values can be accessed through a function \textit{DefaultVal}.
	
\paragraph{Commands and the State.} 
During execution, when executing a step $F$, we transform the syntax of $F$ to an intermediate language to make the atomic actions executed in the execution of $\mathcal{R}$ explicit, which allows us to keep the operational semantics of \Lname simple with regards to step execution.
We express these atomic actions as \emph{commands}, which are ordered in the intermediate language in a variant of polish notation, which puts the operators after the values.
These commands are then given to the struct instances to execute.

Formally, a struct instance $s$ with some label $\ell_s$ is a tuple $\langle \StructType, \ComList, \Stack, \Env\rangle$, where $\StructType$ is the struct type, $\ComList$ is a list of commands to be executed, $\Stack$ is a value stack where intermediate values are stored during execution, and $\Env$ is the variable environment.
The commands are given below, with simplified descriptions of the requirements and the actions associated with them (a complete formalisation is given later in this section).
\begin{enumerate}
	\item The push command, $\push{\mathit{v}}$ with value $\mathit{v}$, causes $s$ to push $\mathit{val}$ to its value stack $\Stack$. There is a special variant, $\push{\this}$, for which $s$ pushes $\ell_s$ to $\Stack$.
	\item The read command, $\readg{x}$ with variable $x\in\mathit{Var}_\mathcal{R}\cup\Par_\mathcal{R}$, requires a label $\ell$ on $\Stack$ to denote the location of $v$. The associated action is that $s$ reads the variable $x$ in the variable environment of the struct instance of $\ell$ and puts the resulting value on $\Stack$.
	\item The write command, $\writev{x}$ with variable $x\in\mathit{Var}_\mathcal{R}\cup\Par_\mathcal{R}$, requires a label $\ell$ on $\Stack$ to denote the location of $x$ and a value $v$ on $\Stack$. The associated action is that $s$ writes the value $v$ to the entry for $x$ in the variable environment of the struct instance of $\ell$.
	\item The constructor command, $\cons{\StructType'}$ with some $\StructType'\in\StructTypes_\mathcal{R}$, requires that for every parameter $p_1, \ldots, p_n$ of $\StructType$, there is a value $v_1, \ldots, v_n$ on the stack. The associated action is that $s$ initialises a new struct instance $\langle \StructType, \vempty, \vempty, \Env'\rangle$ s.t. $\Env'(p_i) = v_i$ for all $1\leq i\leq n$ with label $\ell'$ and puts $\ell'$ on $\Stack$.
	\item The if command, $\Ifc{C}$ with a command list $C$, requires a boolean value on $\Stack$. The associated action is that $s$ adds the commands in $C$ to $\ComList$ iff the top value on the value of $\Stack$ is the value $\true$.
	\item The not command, $\Notc$, requires a boolean value on $\Stack$. The associated action is that $s$ negates the top value of the value stack.
	\item The operator command, $\Operator(o)$ with some binary operator $o\in\SynOp$, requires two values compatible with $o$ on $\Stack$. The associated action is that $s$ applies $o$ to the top two values of $\Stack$.
\end{enumerate}

The set of all commands is denoted $\mathcal{C}$.
Note that commands assume certain values on the top of the stack $\Stack$.
These values are guaranteed to be present if a program is well-formed, as shown in~\cite[Theorem 7.14]{franken-autonomous-2025}. If the values are not there, the command cannot execute.

The syntax is transformed into commands through the following function:
\begin{defi}[Interpretation function~\cite{franken-autonomous-2025}]\label{def: interp}
	Let $x, x_1, \ldots, x_n\in \Id$ be variables, $E, E_1, \ldots, E_m\in\Expr$ expressions, $g\in\Literals$ a literal, $\StructType\in\Id$ a struct type, $S \in \Stat$ a statement, $\mathcal{S} \in \Stat^{*}$ a list of statements, $\SynType \in \SynTypes$ a type and $\circ \in \SynOp$ an operator from the syntax. Let the list $x_1;...;x_n$ be the list of $n$ variables from $x_1$~to~$x_n$.
	We define the \emph{interpretation function} $\interp{\cdot}: \Stat^{*}\cup\Expr\rarr\Commands^{*}$ transforming a list of statements or expressions into a list of commands:
    \begin{align*}
        \interp{g}                                       			& = \push{\getValue(g)}                                	\\
        \interp{\mathtt{this}}                          			& = \push{\this}                                       	\\
        \interp{\mathtt{null}_\SynType}                  			& = \push{\defaultVal(\SynType)}                       	\\
        \interp{x_1.\cdots.x_n}                          			& = \push{\this};\readg{x_1};\ldots;\readg{x_n}        	\\
        \interp{!E}                             					& = \interp{E};\Notc                                   	\\
        \interp{E_1 \mathop{\mathtt{op}} E_2}   					& = \interp{E_1};\interp{E_2};\Operator(\circ)   		\\
        \interp{\mathtt{if} \; E \; \mathtt{then} \{\mathcal{S}\}} 	& =  \interp{E};\Ifc{\interp{\mathcal{S}}}				\\
        \interp{T\,x := E}                      					& = \interp{x_1.\cdots.x_0.x := E} 						\\
        \interp{x_1.\cdots.x_n.x := E}          					& = \interp{E};\interp{x_1.\cdots.x_n};\writev{x}       \\ 
        \interp{\StructType (E_1,\ldots,E_m)}   					& = \interp{E_1};\ldots;\interp{E_m};\cons{\StructType} \\  
        \interp{\varepsilon}                  						& = \vempty												\\  
        \interp{S;\mathcal{S}}                                     	& = \interp{S};\interp{\mathcal{S}}                     
    \end{align*}
	Note that if $n = 0$, $\interp{x_1.\cdots.x_n} = \push{\this}$ as a special case. We require $\push{\this}$ to give us the label of the current executing struct instance at the start of dereferencing any pointer.
\end{defi}

As an example, interpreting the syntax of the step $\mathit{reachability}$ in $\mathcal{R}$ leads to the following commands:
\begin{align*}
	&\push{\this};\readg{\text{in}};\readg{\text{reach}};\push{\true};\Operator(=);\Ifc{\\
	&\qquad \push{\true};\push{\this};\readg{\text{out}};\writev{\text{reach}}\\
	&}
\end{align*}

Executions are sequences of transitions between \emph{states}. To reflect the current state of the program, states are a tuple $\langle \Sched, \Structs, \Stab\rangle$. Here, $\Sched$ is the currently still-to-be-executed schedule of the program. The second element $\Structs$ is a partial function from labels to struct instances, the \emph{struct environment}, which contains all the current struct instance tuples and therefore also keeps track of the variables, their values and which commands still have to be executed by which struct instance. This struct environment is what formally links the labels to the struct instance tuples. The last element $\Stab$ of the state is the \emph{stability stack}, which is used to keep track of the currently executing fixpoints and whether they are stable. States are collected in the \emph{state space} $\StateSpace$.		

When starting execution of a fixpoint, a new Boolean value is placed on the stability stack corresponding to this fixpoint. This value is set to true every time the fixpoint iterates, and is set to false whenever a parameter is changed during the execution of the fixpoint. If at the end of a fixpoint iteration the value is still true, the fixpoint terminates and removes the variable from the stability stack.

For example, in $\mathcal{R}$, the first time the schedule reaches $\mathit{Fix}(\mathit{reachability})$, the value $\true$ will be placed on the stability stack. Then $\mathit{reachability}$ is executed. If for some struct instance, $\text{out.reach} := \true$ is executed, the top of the stability stack will be set to $\false$. Then at the end of executing $\mathit{reachability}$, another iteration will be started depending on whether the top of the stability stack is $\false$.
If not, the fixpoint is stable and will terminate.

For every program $\Program$, we define an initial state $\StateP^0_\Program$:
\begin{defi}[Initial state~\cite{franken-autonomous-2025}]\label{def: initial}
	The \emph{initial state} of $\Program$ is $\StateP_{\Program}^0 = \langle\Sched_{\Program}, \Structs_{\Program}^0, \vempty\rangle$, where $\Structs_{\Program}^0(\nilL{\StructType}) = \langle \StructType, \vempty, \vempty, \Env^0_\StructType \rangle$ for all $\StructType \in \StructTypes_{\Program}$ and $\Structs_{\Program}^0(\ell) = \bot$ for all other labels.
\end{defi}
Here, $\Env^0_\StructType$ is defined as $\Env^0_\StructType(p) = \defaultVal(\SynType)$ for all $p \in \Par(\StructType, \Program)$ where $\SynType$ is the type of $p$ and $\Par(\StructType, \Program)$ refers to the parameters of $\StructType$ in $\Program$. For all other variables $x$, $\Env^0_\StructType(x)$ is arbitrary.

In our example, $\Env^0_\mathit{Node}$ yields $\false$ for $\mathit{reach}$ and an arbitrary value for all other inputs, and $\Env^0_\mathit{Edge}$ yields $\ell^0_\mathit{Node}$ for both $\mathit{in}$ and $\mathit{out}$ and an arbitrary value for all other inputs.
The initial state for $\mathcal{R}$ is then $\StateP_{\mathcal{R}}^0 = \langle\mathit{init} < \mathit{Fix(reachability)}, \Structs_{\mathcal{R}}^0, \vempty\rangle$, where $\Structs_{\mathcal{R}}^0(\nilL{\mathit{Node}}) = \langle\mathit{Node}, \vempty, \vempty, \Env^0_\mathit{Node} \rangle$ and $\Structs_{\mathcal{R}}^0(\nilL{\mathit{Edge}}) = \langle\mathit{Edge}, \vempty, \vempty, \Env^0_\mathit{Edge} \rangle$ and $\Structs_{\mathcal{R}}^0(\ell) = \bot$ for all other labels.
		
\paragraph{The Transitions.}
The execution of a program is expressed as a sequence of transitions between states. The \emph{transitions} are expressed as a transition relation $\Rarr_\Program$ containing fifteen transition rules, defined relative to the executing program $\Program$. These come in two categories: transition rules for executing commands and transition rules for dealing with the schedule. They are shown in Figure~\ref{fig: semSystem}. For the transitions of the schedule, we remark that the schedule is interpreted as a list, even though the rules use the syntax of \Lname for schedule notation. Therefore, for example, the \textbf{FixInit} rule is also applicable on a schedule $\mathit{Fix}(\mathit{sc})$, even though it is not followed by a barrier with another schedule behind it.
		
\begin{figure}[h]
	\centering
	\fbox{
		\resizebox{\linewidth}{!}{
{\renewcommand{\arraystretch}{2} 
	\begin{tabular}{c}
		\fbox{
			Command Rules
		} \\
		\begin{minipage}{300px}
			\begin{equation*}
				\inference[(\textbf{ComPush})]{
					\Structs(\ell) = \langle \StructType, \push{\mathit{v}};\CList, \Stack, \Env\rangle
				}{
					\langle \Sched, \Structs, \Stab \rangle\sr \langle \Sched, \Structs[\ell \mapsto \langle \StructType, \CList, \Stack;\mathit{v}, \Env\rangle], \Stab\rangle
				}
			\end{equation*}
		\end{minipage}\hfill
		\begin{minipage}{300px}
			\begin{equation*}
				\inference[(\textbf{ComPushThis})]{
					\Structs(\ell) = \langle \StructType, \push{\this};\CList, \Stack, \Env\rangle
				}{
					\langle \Sched, \Structs, \Stab \rangle\sr\langle \Sched, \Structs[\ell \mapsto \langle \StructType, \CList, \Stack;\ell, \Env\rangle], \Stab\rangle
				}
			\end{equation*}
		\end{minipage} 
		\\ \\
		\begin{minipage}{300px}
			\begin{equation*}
				\inference[(\textbf{ComRd})]{
					\Structs(\ell) = \langle \StructType, \readg{x};\CList, \Stack;\ell', \Env\rangle\\
					\Structs(\ell') = \langle \StructType', \ComList', \Stack', \Env'\rangle
				}{
					\langle \Sched, \Structs, \Stab \rangle\sr\langle \Sched, \Structs[\ell \mapsto \langle \StructType, \CList, \Stack;\Env'(x), \Env\rangle], \Stab\rangle
				}
			\end{equation*}
		\end{minipage}\hfill
		\begin{minipage}{300px}
			\begin{equation*}
				\inference[(\textbf{ComWr})]{
					\Structs(\ell) = \langle \StructType, \writev{x};\CList, \Stack;v;\ell',
					\Env\rangle \\
					\Structs(\ell') = \langle \StructType', \ComList', \Stack', \Env'\rangle \\
					\ell' \notin \Labels^{0} \lor x\notin \Par{}(\StructType', \Program) \\
					\mathit{su} = (x\notin\Par{}(\StructType', \Program)\lor\Env'(x) = v) \\
				}{
					\begin{aligned}
						\langle \Sched, \Structs, \Stab \rangle\sr \langle \Sched, \Structs&[\ell \mapsto \langle \StructType, \CList, \Stack, \Env\rangle][\ell', 4\mapsto \Env'[x\mapsto v]], \\
						&\Stab_1 \land \mathit{su};\dots;\Stab_{|\Stab|} \land \mathit{su}\rangle
					\end{aligned}
				}
			\end{equation*}
		\end{minipage} 
		\\ \\
		\begin{minipage}{300px}
		 	\begin{equation*}
		 		\inference[(\textbf{ComWrNSkip})]{
		 			\Structs(\ell) = \langle \StructType, \writev{x};\CList, \Stack;v;\ell',
		 			\Env\rangle \\
		 			\Structs(\ell') = \langle \StructType', \ComList', \Stack', \Env'\rangle \\
		 			\ell' \in \Labels^{0} \land x\in \Par{}(\StructType', \Program) \\
		 		}{
		 			\begin{aligned}
		 				&\langle \Sched, \Structs, \Stab \rangle\sr \langle \Sched, \Structs[\ell \mapsto \langle \StructType, \CList, \Stack, \Env\rangle], \Stab\rangle
		 			\end{aligned}
		 		}
		 	\end{equation*}
		\end{minipage}\hfill
		\begin{minipage}{300px}
			\begin{equation*}
				\inference[(\textbf{ComNot})]{
					\Structs(\ell) = \langle \StructType, \Notc;\CList, \Stack;b, \Env\rangle
				}{
					\langle \Sched, \Structs, \Stab \rangle\sr\langle \Sched, \Structs[\ell \mapsto \langle \StructType, \CList, \Stack;\neg b, \Env\rangle], \Stab\rangle
				}
			\end{equation*}
		\end{minipage} 
		\\ \\
		\begin{minipage}{300px}
			\begin{equation*}
				\inference[(\textbf{ComOp})]{
					\Structs(\ell) = \langle \StructType, \Operator(\circ);\CList, \Stack;a;b, \Env\rangle
				}{
					\langle \Sched, \Structs, \Stab \rangle\sr\langle \Sched, \Structs[\ell \mapsto \langle \StructType, \CList, \Stack;(a\mathop{o} b), \Env\rangle],\Stab\rangle
				}
			\end{equation*}
		\end{minipage}\hfill
		\begin{minipage}{300px}
			\begin{equation*}
				\inference[(\textbf{ComCons})]{
					\Structs(\ell) = \langle \StructType, \cons{\StructType'};\CList, \Stack;v_1;\ldots;v_n, \Env\rangle \\ \Par{}(\StructType', \Program) = \mathit{p}_1: T_1;...;\mathit{p}_n: T_n \\
					\Structs(\ell')=\bot
				}{
					\begin{aligned}
						\langle \Sched, \Structs, &\Stab \rangle\sr \langle \Sched, \Structs[\{\ell \mapsto \langle sL, \CList, \Stack;\ell', \Env\rangle,
						\\[-4pt]
						&\ell'\mapsto\langle sL', \vempty ,\vempty,\Env^{0}_{\StructType'}[\{\mathit{p}_1\mapsto v_1, \ldots, \mathit{p}_n\mapsto v_n\}]\rangle\}],\\
						&\qquad\false^{|\Stab|}\rangle
					\end{aligned}
				}
			\end{equation*}
		\end{minipage} 
		\\ \\
		\begin{minipage}{300px}
			\begin{equation*}
				\inference[(\textbf{ComIfT})]{
					\Structs(\ell) = \langle \StructType, \Ifc{C};\CList, \Stack;\true, \Env\rangle
				}{
					\langle \Sched, \Structs, \Stab \rangle\sr\langle \Sched, \Structs[\ell \mapsto \langle \StructType, C;\CList, \Stack, \Env\rangle], \Stab\rangle
				}
			\end{equation*}
		\end{minipage}\hfill
		\begin{minipage}{300px}
			\begin{equation*}
				\inference[(\textbf{ComIfF})]{
					\Structs(\ell) = \langle \StructType, \Ifc{C};\CList, \Stack;\false, \Env\rangle
				}{
					\langle \Sched, \Structs, \Stab \rangle\sr\langle \Sched, \Structs[\ell \mapsto \langle \StructType, \CList, \Stack, \Env\rangle], \Stab\rangle
				}
			\end{equation*}
		\end{minipage}
		\\ \\ \hdashline
		\fbox{
			Schedule Rules
		} \\
		\begin{minipage}{330px}
			\begin{equation*}
				\inference[(\textbf{InitG})]{\mathit{Done}(\sigma)}{
					\begin{aligned}
						&\langle F<\mathit{sc}, \Structs, \Stab \rangle\sr\\
						&\qquad \langle \mathit{sc}, \Structs[\{\ell \mapsto \langle \StructType, \interp{\StatList_{\StructType}^{F}}, \vempty, \Env\rangle \mid \Structs(\ell) = \langle \StructType, \varepsilon, \Stack, \Env\rangle\}],\Stab\rangle
					\end{aligned}
				}
			\end{equation*}
		\end{minipage}\hfill
		\begin{minipage}{290px}
			\begin{equation*}
				\inference[(\textbf{InitL})]{\mathit{Done}(\sigma)}{
					\begin{aligned}
						&\langle \StructType.F<\mathit{sc}, \Structs, \Stab \rangle \sr \\
						&\qquad \langle \mathit{sc}, \Structs[\{\ell \mapsto \langle \StructType, \interp{\StatList_{\StructType}^{F}}, \vempty, \Env\rangle\mid \Structs(\ell) = \langle \StructType, \varepsilon, \Stack, \Env\rangle\}],\Stab\rangle
					\end{aligned}
				}
			\end{equation*}
		\end{minipage} 
		\\ \\
		\begin{minipage}{300px}
			\begin{equation*}
				\inference[(\textbf{FixInit})]{\mathit{Done}(\sigma)}{
					\begin{aligned}
						&\langle \mathit{Fix}(\mathit{sc})<\mathit{sc}_1, \Structs, \Stab \rangle\sr\\
						&\qquad \langle \mathit{sc}<\mathit{aFix}(\mathit{sc})<\mathit{sc}_1, \Structs,\Stab;\true\rangle
					\end{aligned}
				}
			\end{equation*}
		\end{minipage}\hfill
		\begin{minipage}{300px}
			\begin{equation*}
				\inference[(\textbf{FixIter})]{\mathit{Done}(\sigma)}{
					\begin{aligned}
						&\langle \mathit{aFix}(\mathit{sc})<\mathit{sc}_1, \Structs,  \Stab;\false \rangle\sr\\
						&\qquad\langle \mathit{sc}<\mathit{aFix}(\mathit{sc})<\mathit{sc}_1, \Structs, \Stab;\true\rangle
					\end{aligned}
				}\\
			\end{equation*}
		\end{minipage} 
		\\ \\ 
		\begin{minipage}{300px}
			\begin{equation*}
				\inference[(\textbf{FixTerm})]{\mathit{Done}(\sigma)}{
					\langle \mathit{aFix}(\mathit{sc})<\mathit{sc}_1, \Structs, \Stab;\true \rangle\sr\langle \mathit{sc}_1, \Structs, \Stab\rangle
				}
			\end{equation*}
		\end{minipage}\hfill
	\end{tabular}
}
		}
	}
	\caption{The semantics of \Lname, ordered by category.}
	\label{fig: semSystem}
\end{figure}

We use (occurrences of) $x$ for variables, $p$ for parameters, $v$ for values and $b$ for Boolean values. The notation $\circ$ refers to a syntactic operator, and we assume that $o$ is a corresponding semantic operator. The notation $C$ refers to a list of commands.  $\Par(\StructType, \Program)$ refers to the parameters of $\StructType$ in the syntax of $\Program$. If $\Structs$ is note defined for $\ell$, this is denoted as $\Structs(\ell) = \bot$. Steps are denoted with $F$ and occurrences of $\mathit{sc}$ are schedules, as is $\Sched$. The symbol $S^F_\StructType$ denotes the list of statements in step $F$ in struct $\StructType$ in $\Program$. The symbol $\mathit{aFix}$ is a special fixpoint notation that refers to a fixpoint encountered before but not yet terminated.

The predicate $\mathit{Done}(\Structs)$ is defined as:
\begin{equation*}
	\mathit{Done}(\sigma) = \forall\ell.(\Structs(\ell) = \bot \vee \exists \StructType, \Stack, \Env. \Structs(\ell) = \langle \StructType, \vempty, \Stack, \Env\rangle)
\end{equation*}
which intuitively means that no existing struct instance still has a command to execute.		

The transition relation for $\mathcal{R}$ is therefore $\Rarr_\mathcal{R}$. Which program the transition system is defined for impacts the rules \textbf{ComWr}, \textbf{ComWrNSkip}, \textbf{InitG} and \textbf{InitL}, as those are dependent on information from the program (parameters and statements in steps respectively). 

With this transition system, we can give the full semantics of any program $\Program$, which is defined as the tuple $\langle \StateSpace, \sr, P_\Program^0\rangle$. The semantics for $\mathcal{R}$ is thus $\langle \StateSpace, \Rarr_\mathcal{R}, P_\mathcal{R}^0\rangle$.

\section{The Implementation of a Turing Machine in \Lname}\label{sec: TMADL}
In this section, we describe the implementation of a Turing machine in \Lname.
Let $T = (Q, q_0, F, \Gamma, \Sigma, B, \delta)$ be a Turing machine and $Z$ an input string.
We implement $T$ and initialize the tape to $Z$ in \Lname.
We assume w.l.o.g. that $Q\subseteq \mathbb{Z}$ with $q_0 = 0$ and that $\Gamma \subseteq \mathbb{Z}$ with $B = 0$.

We model a cell of $T$'s tape by a struct \textit{TapeCell}, with a left cell (parameter \textit{left}), a right cell (\textit{right}) and a cell symbol (\textit{symbol}).
The control of $T$ is modeled by a struct \textit{Control}, which saves a tape head (variable \textit{head}), a state $q\in Q$ (\textit{state}) and whether $q\in F$ (\textit{accepting}).
See Listing~\ref{ex:sched}.
\begin{lstlisting}[float=t,caption={The \Lname Turing machine structure}, label={ex:sched}]
struct (*\textit{TapeCell}*) ((*\textit{left}*): (*\textit{TapeCell}*), (*\textit{right}*): (*\textit{TapeCell}*), (*\textit{symbol}*): Int){} //def. of TapeCell

struct (*\textit{Control}*) ((*\textit{head}*): (*\textit{TapeCell}*), (*\textit{state}*): Int, (*\textit{accepting}*): Bool) {
	(*\textit{transition}*) {
		(*\color{gray} see Listing~\ref{ex:clause} and \ref{ex:transition}*)				//definition of the step "transition"
	}
	(*\textit{init}*) {
		(*\color{gray} see Listing~\ref{ex:init}*)									//definition of the step "init"
	}
}

(*\textit{init}*) < Fix((*\textit{transition}*))       //schedule: run "init" once and then iterate "transition"\end{lstlisting}
The step \textit{transition} in the \textit{Control} struct models the transition function $\delta$.
For every pair $(q, z)\in Q\times \Gamma$ s.t. $\delta(q, z) = (q', z', D)$ with $D\in \mathbb{D}$, \textit{transition} contains a clause as shown in Listing~\ref{ex:clause} (assuming $D = R$). This clause updates the state and symbol, and saves whether the new state is accepting. It also moves the head and creates a new \textit{TapeCell} if there is no next \textit{TapeCell} instance, which we check in line 6. Note that we also check whether the head is not null, for while the null-instance of head cannot update its own parameter, it can still make new \textit{TapeCell}s, which is undesirable. For this, as $z$ can be $\nil$, we need to explicitly check whether \textit{head} is a $\nil$-instance. Note that $B = 0$, and that if $D = L$ the code only minimally changes.
\begin{lstlisting}[float=t, caption={A clause for $\delta(q, z) = (q', z', R)$.}, label={ex:clause}]
if ((*\textit{state}*) == (*$q$*) && (*\textit{head.symbol}*) == (*$z$*)) then {
	(*\textit{head.symbol}*) := $z'$; 				 //update the head symbol
	(*\textit{state}*) := $q'$; 								  //update the state
	(*\textit{accepting}*) := $(q'\in F)$; 		//the new state is accepting or rejecting
	
	if ((*\textit{head}*) != null && (*\textit{head.right}*) == null) then {
		(*\textit{head.right}*) := (*\textit{TapeCell}*)((*\textit{head}*), null, 0); //call constructor to create a new TapeCell
	}
	(*\textit{head}*) := (*\textit{head.right}*);         //move right
}\end{lstlisting}
	
The clauses for the transitions are combined using an if-else if structure (syntactic sugar for a combination of ifs and variables), so only one clause is executed each time \textit{transition} is executed.
See Listing~\ref{ex:transition}.
\begin{lstlisting}[float = t, caption={The \textit{transition} step. The shown pairs all have an output in $\delta$.}, label={ex:transition}]
(*\textit{transition}*) {
	if ((*\textit{state}*) == (*$q_1$*) && (*\textit{head.symbol}*) == (*$z_1$*)) then{
		/*clause 1*/ 
	}
	else if ((*\textit{state}*) == (*$q_2$*) && (*\textit{head.symbol}*) == (*$z_2$*)) then {
		/*clause 2*/ 
	}
	else if ((*\textit{state}*) == (*$q_3$*) && (*\textit{head.symbol}*) == (*$z_3$*)) then {
		/*clause 3*/ 
	}
	// etc.
} 	\end{lstlisting}
	\begin{lstlisting}[float=t, caption={Initializing input string $Z$.}, label={ex:init}]
(*\textit{init}*) {
	(*\textit{TapeCell}*) cell0 := (*\textit{TapeCell}*)(null, null, $z_0$);            // initialize the tape
	(*\textit{TapeCell}*) cell1 := (*\textit{TapeCell}*)(null, null, $z_1$);
	(*\textit{TapeCell}*) cell2 := (*\textit{TapeCell}*)(null, null, $z_2$);
	
	cell1.(*\textit{left}*) := cell0;                                            // connect the tape
	cell0.(*\textit{right}*) := cell1;
	cell2.(*\textit{left}*) := cell1;
	cell1.(*\textit{right}*) := cell2;
	
	(*\textit{Control}*)(cell0, 0, $(q_0\in F)$);                               //initialize the control
}\end{lstlisting}
In the step \textit{init} in the \textit{Control} struct, we create a \textit{TapeCell} for every symbol $z\in Z$ from left to right, which are linked together to create the tape.
We also create a \textit{Control}-instance.
Listing~\ref{ex:init} shows this for an example tape $Z = z_0,z_1,z_2$.
A call of \textit{init} in the schedule causes the $\nil$-instance of \textit{Control} to initialize the tape.
It also initializes a single non-$\nil$ instance of \textit{Control}.
The schedule will then make that instance of \textit{Control} run the \textit{transition} step until the program stabilizes.
Listing~\ref{ex:sched} shows the final structure of the program.
	
\begin{exa}\label{ex:tm}
As an example, consider a Turing machine
$T_\mathit{ex} = (\{0, 1\}, 0, \{1\}, \{0, 1, 2\}, \{1, 2\},\linebreak 0, \delta)$, with $\delta(0, 0) = (0, 0, R)$, $\delta(0, 1) = (0, 1, R)$ and $\delta(0, 2) = (1, 2, R)$, with no transitions defined for state $1$. This Turing machine walks through the input and stops after finding a $2$. In Listing~\ref{lst: tmex}, we have created an \Lname program for this Turing machine with input string `$1121$'. The Turing machine with input string `$1121$' is visualised in Figure~\ref{fig:tm}.
\end{exa}
\begin{figure}[t]
	\includegraphics[width=0.75\textwidth]{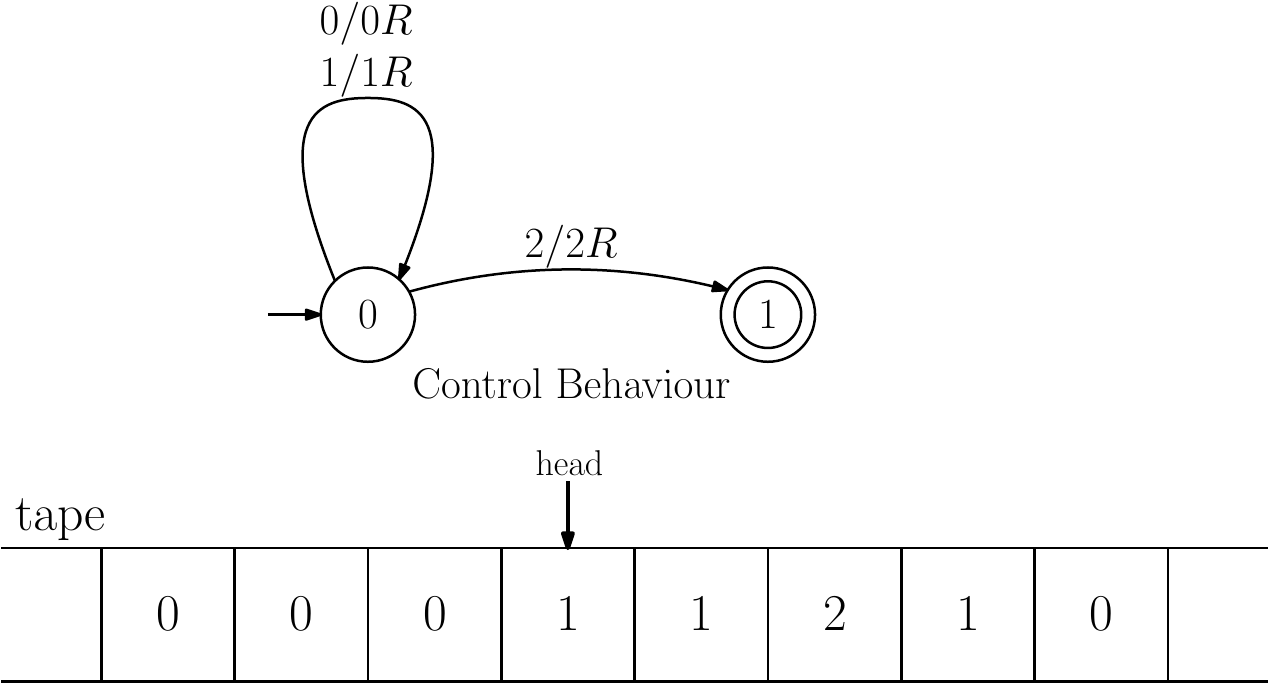}
	\caption{The Turing Machine of Example~\ref{ex:tm} with input string `$1121$'.}
	\label{fig:tm}
\end{figure}

\begin{lstlisting}[float=p, caption={The \Lname program for $T_\mathit{ex}$ with input string $1121$.}, label={lst: tmex}]
struct (*\textit{TapeCell}*) ((*\textit{left}*): (*\textit{TapeCell}*), (*\textit{right}*): (*\textit{TapeCell}*), (*\textit{symbol}*): Int){} //def. of TapeCell
struct (*\textit{Control}*) ((*\textit{head}*): (*\textit{TapeCell}*), (*\textit{state}*): Int, (*\textit{accepting}*): Bool) {
	(*\textit{transition}*) {
		if ((*\textit{state}*) == 0 && (*\textit{head.symbol}*) == 0) then{ // transition (*{\color{gray} $\delta$}*)(0,0)
			(*\textit{head.symbol}*) := 0; 				
			(*\textit{state}*) := 0; 				
			(*\textit{accepting}*) := false; 		
			if ((*\textit{head}*) != null && (*\textit{head.right}*) == null) then {
				(*\textit{head.right}*) := (*\textit{TapeCell}*)((*\textit{head}*), null, 0); 
			}
			(*\textit{head}*) := (*\textit{head.right}*);         
		}
		else if ((*\textit{state}*) == 0 && (*\textit{head.symbol}*) == 1) then{ // transition (*{\color{gray} $\delta$}*)(0,1)
			(*\textit{head.symbol}*) := 1; 				
			(*\textit{state}*) := 0; 								  
			(*\textit{accepting}*) := false; 		
			if ((*\textit{head}*) != null && (*\textit{head.right}*) == null) then {
				(*\textit{head.right}*) := (*\textit{TapeCell}*)((*\textit{head}*), null, 0); 
			}
			(*\textit{head}*) := (*\textit{head.right}*);         
		}
		else if ((*\textit{state}*) == 0 && (*\textit{head.symbol}*) == 2) then{ // transition (*{\color{gray} $\delta$}*)(0,2)
			(*\textit{head.symbol}*) := 2; 				
			(*\textit{state}*) := 1; 								  
			(*\textit{accepting}*) := true; 		
			if ((*\textit{head}*) != null && (*\textit{head.right}*) == null) then {
				(*\textit{head.right}*) := (*\textit{TapeCell}*)((*\textit{head}*), null, 0);
			}
			(*\textit{head}*) := (*\textit{head.right}*);        
		}
	}
	(*\textit{init}*) {
		(*\textit{TapeCell}*) cell0 := (*\textit{TapeCell}*)(null, null, $1$);            // initialize the tape
		(*\textit{TapeCell}*) cell1 := (*\textit{TapeCell}*)(null, null, $1$);
		(*\textit{TapeCell}*) cell2 := (*\textit{TapeCell}*)(null, null, $2$);
		(*\textit{TapeCell}*) cell3 := (*\textit{TapeCell}*)(null, null, $1$);
		cell1.(*\textit{left}*) := cell0;                                            // connect the tape
		cell0.(*\textit{right}*) := cell1;
		cell2.(*\textit{left}*) := cell1;
		cell1.(*\textit{right}*) := cell2;
		cell3.(*\textit{left}*) := cell2;
		cell2.(*\textit{right}*) := cell3;
		(*\textit{Control}*)(cell0, 0, false);        
	}
}
(*\textit{init}*) < Fix((*\textit{transition}*))       //schedule: run "init" once and then iterate ``transition"\end{lstlisting}
	
\section{Properties of \Lname Programs}\label{sec:prop}
This section discusses some properties of \Lname programs that are relevant for the Turing completeness proof. This section is a partial recap of Section 7 of~\cite{franken-autonomous-2025}. In Section~\ref{sec: standard}, we recap some standard properties of \Lname programs. In Section~\ref{sec: well-formed} we summarize those results of~\cite[Theorem 7.14]{franken-autonomous-2025} which are relevant to this paper. This theorem is not included in full because it is out of the scope of this paper.

\subsection{Standard Properties}\label{sec: standard}
First, as \Lname uses a reference notation of the form $x.y$ to access variable $y$ from the struct instance saved under variable $x$, we define an \emph{semantic reference notation}, so that we can use the notation freely in both the syntactic and semantic contexts.

\begin{defi}[Reference Notation]
	Let $\Structs$ be a structure environment and let $x$ be a struct instance with label $\ell_x$ such that $\Structs(\ell_x) = \langle \StructType, \ComList, \Stack, \Env\rangle$. We define the notation $\ell_x.a^i$ inductively on $i$, with a variable $a$:
	\begin{enumerate}[nolistsep]
		\item $\ell_x.a^1 = \xi(a)$,
		\item $\ell_x.a^i = \Env'(a)$ for $i>1$, with $\ell_x.a^{i-1}\in\Labels$ and $\Structs(\ell_x.a^{i-1}) = \langle \StructType', \ComList', \Stack', \Env'\rangle$.
	\end{enumerate}
	We write $\ell_x.a$ for $\ell_x.a^1$ and $x.a$ for $\ell_x.a$, where $\Structs(\ell_x) = x$ and $a$ is a parameter of $x$.
\end{defi}

To give the following two definitions, we first define that an \emph{execution} of a step $F$ from $P$ is a chain of transitions starting with an \textbf{Init} transition for $F$ from $P$ and ending at the first state with struct environment $\sigma$ s.t. $\mathit{Done}(\sigma)$ holds. We say that $P$ can execute $F$ if such a chain exists from $P$. Additionally, note that while the \Lname semantics does not define which labels get assigned to new struct instances beyond that they should be new, the exact labels which get assigned to new struct instances do not have an impact on the program: they aren't null-labels, and labels which are not null-labels are only used abstractly. We therefore define determinism to not take newly assigned labels into account. The definition of determinism for \Lname is then:
\begin{defi}[Determinism]
	Let $F$ be a \textit{step} in an \Lname program.
	Then $F$ is \emph{deterministic} for some state $P$ iff $P$ can execute $F$ and there exists exactly one state that is reached by executing $F$ modulo the labels newly assigned to struct instances during $F$. 
\end{defi}

We follow it up with a definition for \emph{race conditions} in \Lname. We consider an \emph{access} of a variable $x$ to be an application of \textbf{ComRd} or \textbf{ComWr} with $x$ as the parameter of the command that induced that transition.		
\begin{defi}[Race Conditions]
	Let $F$ be a step of some \Lname program $\Program$. Let $P$ be a state of $\Program$. Then $F$ contains a \emph{race condition} starting in $P$ iff there is an enabled \textbf{InitL} or \textbf{InitG} transition for $F$ from $P$ and during the execution of $F$ after taking this \textbf{Init} transition, there exist a parameter $x$ which is accessed by two distinct struct instances $a$ and $b$, with at least one of these accesses writing to $x$. We call a race condition between writes a \emph{write-write} race condition, and a race condition between a read and a write a \emph{read-write} race condition.
\end{defi}

We combine these two concept in the following lemma.

\begin{lem}[\Lname Determinism~\cite{franken-autonomous-2025}]\label{lem:det}
	An \Lname step $F$ is \emph{deterministic} for some state $P$ if $F$ does not contain a race condition starting in $P$.
\end{lem}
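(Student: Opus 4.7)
The plan is to prove determinism by a standard diamond/confluence argument. After the unique \textbf{InitG} or \textbf{InitL} transition that begins executing $F$ from $P$, the only non-determinism remaining is the choice of which struct instance performs the next command-transition: each command rule in Figure~\ref{fig: semSystem} is completely determined once the acting label $\ell$ and the current state are fixed, the sole freedom being the fresh label picked by \textbf{ComCons}. It therefore suffices to show that any two enabled command-transitions $t_a$ (acting on $\ell_a$) and $t_b$ (acting on $\ell_b$) with $\ell_a \ne \ell_b$ commute, up to bijective renaming of the labels freshly allocated during the two steps.

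I would establish this diamond property by case analysis on the pair of command types. The transitions $t_a$ and $t_b$ modify the entries $\Structs(\ell_a)$ and $\Structs(\ell_b)$ respectively, which are distinct and hence disjoint. The remaining possibly shared pieces of state are (i) a third struct-environment entry accessed via \textbf{ComRd}/\textbf{ComWr}, (ii) a freshly allocated entry via \textbf{ComCons}, and (iii) the stability stack $\Stab$ updated by \textbf{ComWr} and \textbf{ComCons}. For (i), the absence of a race condition ensures that if $t_a$ and $t_b$ both touch the same parameter of a common $\ell^{*}$, they are both reads, and two reads (or two disjoint accesses) clearly commute. For (ii), independent allocations yield distinct fresh labels regardless of order, so the two results agree under a bijective renaming of those labels. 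For (iii), the stability update in \textbf{ComWr} is an entry-wise conjunction with a local Boolean $\mathit{su}$, and in \textbf{ComCons} it is an overwrite with $\false^{|\Stab|}$; these operations commute since conjunction is commutative and $\false \land x = x \land \false = \false$ for any $x$.

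Once the diamond property is established, the proof concludes via Newman's lemma: restricted to the execution of $F$, the relation $\sr$ is terminating (each maximal sequence reaches a state with $\mathit{Done}(\sigma)$, since $P$ can execute $F$) and locally confluent, hence confluent. Consequently, any two maximal sequences of command-transitions from $P$ after the \textbf{Init} step yield the same final state up to fresh-label renaming, which matches the definition of determinism exactly.

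The main obstacle will be the careful case enumeration for the diamond, particularly the interaction between \textbf{ComWr} and \textbf{ComCons}: the former conjoins each stability-stack entry with a locally determined Boolean while the latter overwrites the entire stack with $\false^{|\Stab|}$, and one must verify the composite effect is order-independent in every sub-case. A secondary technical point is showing that the fresh-label renaming equivalence is a congruence with respect to $\sr$, so that it propagates through entire traces rather than only across single diamonds, and that the no-race-condition hypothesis, defined in terms of ``the execution'', can be interpreted consistently across the different execution orders we need to compare.
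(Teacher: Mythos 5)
The paper does not actually prove Lemma~\ref{lem:det}; it is imported verbatim from the companion journal paper~\cite{franken2025autonomousdatalanguage}, so there is no in-paper proof to compare against. Your proposal is a self-contained confluence argument of the standard shape (one-step diamond for independent command transitions, then Newman's lemma), and its outline is sound: after the \textbf{Init} transition the only nondeterminism is the interleaving of command transitions plus the choice of fresh labels in \textbf{ComCons}, the command lists and stacks of distinct instances are disjoint components of $\Structs$, the no-race hypothesis exactly rules out the one genuinely conflicting case (two accesses to the same parameter of a common instance with a write), and your check that the $\Stab$ updates commute ($\land\,\mathit{su}$ versus overwriting with $\false^{|\Stab|}$) is correct in every order.

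Two points need more care than your sketch gives them. First, Newman's lemma requires that \emph{all} reduction sequences from $P$ terminate, whereas the paper's definition of ``$P$ can execute $F$'' only asserts the existence of one chain reaching a $\mathit{Done}$ state; you need a separate termination argument for command transitions, and it is not entirely trivial because \textbf{ComIfT} \emph{adds} the commands $C$ to the command list. A well-founded measure such as the total size of all command lists counted with nested if-bodies does the job, but it must be stated. Second, because the diamond only closes up to a bijective renaming of freshly allocated labels, you are not applying plain Newman's lemma but confluence modulo an equivalence relation; you flag this yourself, and it is the right thing to worry about, but it is a real proof obligation (the renaming must be shown to be a bisimulation for $\sr$, and the no-race hypothesis must be shown invariant under it and under reordering, since the paper phrases race freedom in terms of ``the execution''). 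Neither issue breaks the approach, but both must be discharged for the argument to be complete.
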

It follows that when a step is deterministic we can ignore interleaving of struct instances during the execution of the step when determining the effects of the step.

\subsection{Properties of Well-formed \Lname Programs}\label{sec: well-formed}		
When programs are well-formed, we know even more about them. In this section, give some results derived from Theorem 7.14 in~\cite{franken-autonomous-2025}, which are important for the Turing completeness proof. Theorem 7.14 is omitted, as stating it and its proof fully is expansive and has no purpose in this paper.

We define that the state after the execution of an expression $E$ or a statement $S$ by some struct instance $s$ is the state resulting from the transition of the last command from $\interp{E}$ or $\interp{S}$ as put in the command list of $s$. Then we know that for any well-formed program $\Program$, the following properties hold:
\begin{cor}[Progress]\label{cor:prog}
	The execution of a well-formed program never gets stuck unless it terminates: every sequence of transitions from a state $P_1$ with struct environment $\Structs_1$ s.t. $\mathit{Done}(\Structs_1) = \false$ eventually reaches a state $P_1'$ with struct environment $\Structs_1'$ s.t. $\mathit{Done}(\Structs_1') = \true$, and every sequence of transitions from a state $P_2$ with struct environment $\Structs_2$ s.t. $\mathit{Done}(\Structs_2) = \true$ eventually reaches a state $P_2'$ with struct environment $\Structs_2$ where either the schedule is empty or $\mathit{Done}(\Structs_2') = \false$.
\end{cor}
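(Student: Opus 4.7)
The plan is to prove the two implications separately, with both resting on Theorem~7.14 (which ensures stack invariants are preserved for well-formed programs) and on well-founded termination measures.

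For the first implication, $\mathit{Done}(\Structs_1) = \false$ means some struct instance $s$ has a non-empty command list. Theorem~7.14 guarantees that the head command of $s$'s command list has its required values on the stack, so the corresponding command rule is enabled. I would then define a well-founded measure $\mu$ on command lists, where every command is weighted $1$ except for $\Ifc{C}$, which is weighted $\mu(C) + 1$ (dominating the expansion produced by \textbf{ComIfT}). Every command rule strictly decreases the multiset of $\mu$-values across struct instances: all rules except \textbf{ComIfT} remove one command and leave the remaining list unchanged, while \textbf{ComIfT} replaces $\Ifc{C}$ by $C$, reducing the weight by $1$. Well-foundedness of the multiset ordering then gives that after finitely many command transitions every command list is empty, i.e. $\mathit{Done}$ holds.

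For the second implication, I would first observe that if the schedule is empty we are already done. Otherwise, a case analysis on the head of the schedule (a global or local step call, $\mathit{Fix}(\mathit{sc})$, or $\mathit{aFix}(\mathit{sc})$ further split on the top of $\Stab$) shows that exactly one schedule rule is enabled, since each schedule rule requires $\mathit{Done}$ to hold. If any \textbf{InitG} or \textbf{InitL} application introduces commands --- which happens precisely when the step body is non-empty and applies to some existing instance --- then $\mathit{Done}$ becomes \false\ in one transition. Otherwise, I would bound the length of the ensuing schedule-only subsequence by a lexicographic measure $(\#_\false(\Stab),\ \nu(\mathit{sc}))$. In this subsequence no commands fire, hence no writes occur, so $\Stab$ is only modified by \textbf{FixInit} (pushing \true), \textbf{FixIter} (flipping the top \false\ to \true) and \textbf{FixTerm} (popping \true); consequently no new \false\ is ever introduced into $\Stab$, so $\#_\false(\Stab)$ is non-increasing and is strictly decreased by \textbf{FixIter}. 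The schedule weight $\nu$ is defined compositionally with $\nu(\mathit{Fix}(\mathit{sc})) = \nu(\mathit{sc}) + \nu(\mathit{aFix}(\mathit{sc})) + 1$ and $\nu(\mathit{aFix}(\mathit{sc})) = 1$, so \textbf{FixInit}, \textbf{FixTerm}, \textbf{InitG} and \textbf{InitL} each strictly decrease $\nu$ while leaving $\#_\false(\Stab)$ unchanged.

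The main obstacle is showing that \textbf{FixInit}, which syntactically expands $\mathit{Fix}(\mathit{sc})<\mathit{sc}_1$ into $\mathit{sc}<\mathit{aFix}(\mathit{sc})<\mathit{sc}_1$, fits into a terminating measure. This is handled by the weighting above: once a $\mathit{Fix}$ has been activated into $\mathit{aFix}$, the activated form contributes only weight $1$, since in the subsequence under analysis the $\mathit{aFix}$ can be consumed by a single \textbf{FixTerm} --- \textbf{FixIter} is accounted for by the lexicographic first component and can fire at most $\#_\false(\Stab)$ times in total. Combining the multiset argument for commands with the lexicographic argument for schedules yields both implications of the corollary.
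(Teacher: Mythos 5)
Your proposal is correct in outline, but it takes a genuinely different route from the paper for the simple reason that the paper gives no proof of this corollary at all: it is stated as one of several consequences of Theorem~7.14 of~\cite{franken2025autonomousdatalanguage}, which the paper explicitly omits, so the entire burden is delegated to that external result. You instead supply a self-contained termination argument, appealing to Theorem~7.14 only where it is genuinely needed, namely to guarantee that the head command of a non-empty command list is always enabled (stack shape and type invariants); without that enabledness the first implication would fail, since a stuck state never reaches a $\mathit{Done}$ state, so isolating this as the sole external dependency is the right call. Your two measures are sound: for commands, the weight $\mu(\Ifc{C}) = \mu(C)+1$ makes every command rule (including \textbf{ComIfT}, and \textbf{ComCons}, which only spawns instances with empty command lists) strictly decrease the total weight, and since no schedule rule can fire while $\mathit{Done}$ is false, this settles the first implication; for the schedule, the lexicographic pair correctly absorbs the expansion caused by \textbf{FixIter} into the count of $\false$ entries on $\Stab$, which cannot grow in a command-free subsequence. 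Two small points you leave implicit and would need to state as invariants: that whenever $\mathit{aFix}$ heads the schedule the stability stack is non-empty (so exactly one of \textbf{FixIter}/\textbf{FixTerm} is enabled), and that labels consumed by $\readg{x}$ and $\writev{x}$ always denote defined struct instances; both are again consequences of the omitted Theorem~7.14. The trade-off is clear: the paper's treatment is shorter and consistent with its policy of recapping prior results, while your argument makes the corollary verifiable within the present document and exposes exactly which well-formedness facts it depends on.
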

\begin{cor}[Expression Results]
	Every expression $E$, when executed by $s$ during a step $F$ in a well-formed program $\Program$, results in a single value $v$ on the stack of $s$. This value is of the semantic type expected from the syntax of $E$ and it is deterministic if there are no read-write race conditions.
\end{cor}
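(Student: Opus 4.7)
The plan is to prove the corollary by structural induction on the expression $E$, using Definition~\ref{def: interp} to determine the command list $\interp{E}$ generated from $E$, and tracing its effect on the stack of $s$ via the command rules in Figure~\ref{fig: semSystem}. Because each struct instance has its own private stack $\Stack$, interleaving with other struct instances does not affect what $s$ does with its own stack, so it suffices to reason about $s$'s sequence of commands in isolation.

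For the base cases, a literal $g$ generates $\push{\getValue(g)}$, whose single \textbf{ComPush} transition leaves one value of the expected semantic type on $\Stack$; $\this$ is handled analogously via \textbf{ComPushThis}, pushing $\ell_s \in \Labels$; and $\nil_T$ pushes $\defaultVal(T)$, whose type is determined by $T$. For the reference case $x_1.\cdots.x_n$, the generated command list is $\push{\this};\readg{x_1};\ldots;\readg{x_n}$: \textbf{ComPushThis} puts $\ell_s$ on top, and then each \textbf{ComRd} pops the current label and pushes $\ell_s.x_1.\cdots.x_i$. Well-formedness of $\Program$ guarantees that each $x_i$ is a valid parameter of the struct instance reached at step $i-1$, so the net effect is to leave exactly one value on $\Stack$, of the syntactic type declared for $x_n$. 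Note that here we implicitly appeal to the result from~\cite[Theorem~7.14]{franken2025autonomousdatalanguage} recapped above, which ensures the required labels are present on the stack when each $\readg{x_i}$ fires.

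For the inductive cases, we use the IH on each sub-expression. For $!E$, the IH gives a single Boolean on top of $\Stack$ after $\interp{E}$; then \textbf{ComNot} replaces it with its negation, which is again a single Boolean. For $E_1 \mathbin{\mathtt{op}} E_2$, the IH applied in sequence places two values of the types expected by $\mathtt{op}$ on top; \textbf{ComOp} consumes them and pushes a single value whose type is dictated by the semantics of $o$. For $\StructType(E_1,\ldots,E_m)$, the IH yields $m$ values of the parameter types of $\StructType$, and \textbf{ComCons} consumes them and pushes the label $\ell'$ of the new instance, leaving exactly one value of type $\StructType$ on $\Stack$. In every case the net change to $\Stack$ is a single pushed value of the expected type, and no other command pops below what $\interp{E}$ itself pushed.

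For determinism, we inspect the command rules to identify the sources of non-determinism visible to $s$. The only rule whose resulting value for $s$'s stack depends on the current global state beyond $s$'s own environment is \textbf{ComRd}, which looks up $\Env'(x)$ in another struct instance; \textbf{ComCons} allocates a fresh label $\ell'$ but this is explicitly factored out by the modulo-labels clause in Lemma~\ref{lem:det}. Hence, if no read-write race condition occurs during the execution of $F$ (so that in particular none occurs during the evaluation of $E$ by $s$), Lemma~\ref{lem:det} gives determinism of the step, and therefore the value left on $\Stack$ by $\interp{E}$ is uniquely determined modulo freshly assigned labels. The main obstacle I anticipate is purely bookkeeping: keeping the stack invariants aligned through the reference case and through nested constructors, where the IH has to be applied several times and one must carefully verify that earlier pushes are preserved beneath later ones — this is exactly the content packaged in~\cite[Theorem~7.14]{franken2025autonomousdatalanguage}, from which the corollary then follows immediately.
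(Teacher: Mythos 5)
Your proposal is correct, and it matches the paper's (implicit) approach: the paper gives no proof of this corollary at all, presenting it as a direct consequence of Theorem~7.14 of~\cite{franken2025autonomousdatalanguage}, which is deliberately omitted here. Your structural induction over $\interp{E}$ and the command rules is the natural unpacking of exactly that deferred content, and you correctly identify \textbf{ComRd} as the sole source of value non-determinism (so that only read-write races matter) and the fresh labels of \textbf{ComCons} as factored out modulo relabelling.
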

\begin{cor}[Execution Effects]\label{cor: statef}
	Let $S$ be a statement executed during the execution of a step $F$ in a well-formed program $\Program$ by a struct instance $s$. Let $P'$ be the state resulting from the transition induced by the last command of $\interp{S}$ for $s$. Then:
	\begin{enumerate}
		\item If $S$ is an update statement $x_1.\ldots.x_n.x := E$, then, with $x$ the variable in the struct instance belonging to a result of an execution of $x_1.\ldots.x_n$, if $x$ is not a parameter of a null-instance, $x$ will be updated to a result of an execution of $E$ in $P'$, and all values in the stability stack will be reset to $\false$ if this means that a parameter has changed. If $x$ is a parameter of a null-instance, $E$ is executed and its effects are still present in $P'$, but $x$ has not been updated.
		\item If $S$ is a variable assignment statement $T\ x:= E$, then it has the same effect as executing the update statement $x:= E$.
		\item If $S$ is an if-statement, in $P'$, $E$ will be executed to a result $b$. If $b = \true$, the if-clause will be taken. If $b = \false$, the if-clause will be skipped.
		\item If $S$ is a constructor statement $\StructType(E_1, \ldots, E_m)$, then in $P'$, there will be a new struct instance for a fresh label $\ell$ with type $\StructType$ and its parameters set to results of executions of $E_1, \ldots, E_m$, and all values in the stability stack will be reset to $\false$. Additionally, $\ell$ will be the last value on the stack of $s$.
	\end{enumerate}
	Additionally, let $\mathcal{E}$ be the expressions executed as a part of $S$. For any expression $E\in\mathcal{E}$, if $E$ is a constructor expressions $\StructType(E_1, \ldots, E_m)$, its result will be a fresh label $\ell'$ s.t. in $P'$, there is a struct instance for $\ell'$ with type $\StructType$ and its parameters set to results of executions of $E_1, \ldots, E_m$. Furthermore, if there exists a constructor expression $E$ in $\mathcal{E}$, all values in the stability stack will be reset to $\false$. 
	Lastly, the effects of a statement will be deterministic if all results from all expressions in $S$ are deterministic.
\end{cor}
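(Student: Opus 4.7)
The plan is to prove each of the four clauses by induction on the structure of the statement $S$, unfolding $\interp{S}$ via Definition~\ref{def: interp} and then following the command list through the transition rules of Figure~\ref{fig: semSystem}. The two earlier corollaries do most of the heavy lifting: Progress guarantees that the execution of $\interp{S}$ by $s$ does reach a state $P'$ corresponding to its last command, while Expression Results tells us that each sub-expression $E_i$ appearing inside $S$ leaves exactly one value $v_i$ of the expected semantic type on $s$'s stack. So the real content of the corollary is just checking that the last few commands of $\interp{S}$ have the claimed effect.

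For clause~(1), the update statement $x_1.\cdots.x_n.x := E$ interprets to $\interp{E};\interp{x_1.\cdots.x_n};\writev{x}$. First I would use Expression Results on $\interp{E}$ to obtain a value $v$ on the stack, and then on $\interp{x_1.\cdots.x_n} = \push{\this};\readg{x_1};\ldots;\readg{x_n}$ to obtain a label $\ell'$ on top of $v$ on the stack, so that the configuration has exactly the shape required by $\writev{x}$. The case split is then between \textbf{ComWr} and \textbf{ComWrNSkip}, governed by whether $\ell' \in \NilLabels$ and $x \in \Par{}(\StructType', \Program)$. In the non-null case \textbf{ComWr} writes $v$ into the variable environment of $\ell'$ and conjuncts $\mathit{su}$ onto every entry of the stability stack, which encodes exactly the ``reset to $\false$ if a parameter has changed'' behaviour claimed. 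In the null-parameter case \textbf{ComWrNSkip} pops without writing, so $E$ is still fully evaluated (its effects on other struct instances, e.g.\ constructors, persist) but $x$ is not updated. Clause~(2) follows immediately because Definition~\ref{def: interp} gives $\interp{T\,x := E} = \interp{x_1.\cdots.x_0.x := E}$, which is the $n=0$ special case reducing to $\interp{E};\push{\this};\writev{x}$ and hence to clause~(1) applied to \textbf{this}.

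For clause~(3), $\interp{\mathtt{if}\,E\,\mathtt{then}\,\{\mathcal{S}\}} = \interp{E};\Ifc{\interp{\mathcal{S}}}$, so Expression Results delivers a Boolean $b$ on the stack (the type assertion here is where well-formedness is used), after which \textbf{ComIfT} or \textbf{ComIfF} fires and either prepends or discards $\interp{\mathcal{S}}$. For clause~(4), the constructor statement $\StructType(E_1,\ldots,E_m)$ interprets to $\interp{E_1};\ldots;\interp{E_m};\cons{\StructType}$; $m$ applications of Expression Results place $v_1;\ldots;v_m$ on the stack, after which \textbf{ComCons} allocates a fresh label $\ell$, builds the new struct instance with $\Env^0_{\StructType'}[\{\mathit{p}_i \mapsto v_i\}]$, pushes $\ell$ onto $s$'s stack, and sets the stability stack to $\false^{|\Stab|}$, matching the corollary verbatim. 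The nested-expression addendum is the same argument applied inside an expression position rather than a statement position.

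The main obstacle, and the only one that really requires care rather than bookkeeping, is the null-instance subtlety in clause~(1): one must argue that even when \textbf{ComWrNSkip} fires, every side effect generated while evaluating $\interp{E}$ (freshly constructed instances, stability-stack resets from those constructors, writes performed to non-null targets during sub-expressions) is already committed before the write command is reached, so the statement's overall effect on the global state is not ``no-op'' but ``no-op at $x$ only''. Similarly, the final determinism claim is a direct appeal to Lemma~\ref{lem:det} combined with the deterministic-result clause of Expression Results: given no read-write races, every intermediate stack value is fixed, and each transition rule is deterministic in its output modulo freshly chosen labels, so the resulting $P'$ is unique up to that same equivalence.
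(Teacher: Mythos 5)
Your argument is essentially correct, but it takes a genuinely different route from the paper, because the paper does not prove this corollary at all: it is presented, together with Progress and Expression Results, as a consequence of Theorem~7.14 of~\cite{franken2025autonomousdatalanguage}, which the authors deliberately omit as out of scope. What you have written is in effect a self-contained reconstruction of the relevant fragment of that omitted theorem: unfold $\interp{S}$ via Definition~\ref{def: interp}, discharge the sub-expressions with Expression Results, and read off the effect of the final command ($\writev{x}$, $\Ifc{C}$, or $\cons{\StructType}$) from the rules of Figure~\ref{fig: semSystem}. Your case analysis is accurate on the points that matter: the stack discipline ($\interp{E}$ before $\interp{x_1.\cdots.x_n}$ leaves $\Stack;v;\ell'$ exactly as \textbf{ComWr} expects), the conjunction with $\mathit{su}$ encoding ``reset only if a parameter actually changed,'' the observation that \textbf{ComWrNSkip} discards only the write itself while the already-committed side effects of evaluating $E$ persist, and the $\false^{|\Stab|}$ reset in \textbf{ComCons}. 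The trade-off is the usual one: the paper's citation buys brevity and defers the substantial bookkeeping --- above all the guarantee that the required values really are on top of the stack when each command fires, which is the content of the omitted type-preservation argument --- whereas your direct proof is checkable within this paper but leans on Expression Results for precisely that guarantee, so it is more explicit rather than more elementary. One small imprecision: the closing determinism claim is not an instance of Lemma~\ref{lem:det}, which is stated for entire steps rather than single statements; the correct justification is the one you actually give (each command rule is a function of the source state modulo freshly chosen labels), and the lemma plays no role there.
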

These corollaries and results allow us to simplify the proof of \Lname programs by keeping the proof on the level of the syntax.

\section{Turing Completeness}\label{sec: exe}
In this section, we show that \Lname is Turing complete. To do this, we establish an equivalence between the results of transitions taken by $T$ with input string $Z$ and the execution the \textit{transition} step of its implementation.
Henceforth, let $\impl$ be the implementation of a Turing machine $T$ with an input string $Z= z_0\ldots z_n$ as specified in Section~\ref{sec: TMADL}. As $\impl$ is well-typed and satisfies the requirements given in Section~\ref{sec:AuD}, $\impl$ is well-formed.

As $\impl$ is well-formed, it does not get stuck, according to Corollary~\ref{cor:prog}. It follows that the execution of $\impl$ consists of the execution of a step \textit{init} from the initial state, and then executions of the step \textit{transition}. Between these executions, there will be one or more state with a struct environment $\Structs$ s.t. $\mathit{Done}(\Structs)$ holds. These states we call \emph{idle states}. For these idle states, we define \emph{implementation configurations}:
\begin{defi}[Implementation Configuration]\label{def: implconf}
Let $P$ be an idle state of $\impl$ containing a single non-$\nil$ instance $c$ of \textit{Control}.
Then we define the \emph{implementation configuration} of $P$ as a tuple $(q_P, t_P)$ s.t. $q_P$ is the value of the \textit{state} parameter of $c$ and $t_P: \mathbb{Z}\rarr \mathbb{Z}$ defined as:
\[
	t_P(i) = \begin{cases}
		c.\mathit{head}.\mathit{symbol} &\text{if } i = 0\\
		c.\mathit{head}.\mathit{left}^{\text{-}i}.\mathit{symbol} &\text{if } i < 0\\
		c.\mathit{head}.\mathit{right}^i.\mathit{symbol} &\text{if } i > 0
	\end{cases},
\]
\end{defi}

For our equivalence, we compare the Turing machine configuration of $T$ with input string $Z$ after doing transitions to the implementation configuration of $\impl$. As the implementation configuration is defined on the struct environment of idle states only, we will abstract from which idle state we take between executions of the \textit{transition} step, as they will all have the same configuration: there is no semantic transition that changes the struct environment enabled when the $\mathit{Done}$-predicate holds. We therefore abstract to collections of idle states between step executions.

To prove \Lname Turing complete, we have to prove, as a base case, that after the initialization we find ourselves into a collection of idle states with the initial configuration of $T$ with $Z$ as its implementation configuration. We also have to prove as a step case that every transition taken by either $T$ with $Z$ or $\impl$ can be exactly mirrored by $\impl$ or $T$ with $Z$ s.t. after the transition, the implementation configuration is again equal to the Turing machine configuration of $T$ with $Z$. To do this, we also have to prove that after any transition taken, there is only a single non-null \textit{Control} instance in the idle states reached. From this, it follows by induction that $\impl$ and $T$ with $Z$ have equivalent behaviour, so $\impl$ is a correct implementation of $T$ with $Z$ and \Lname is Turing complete.

To prove that after initialization, the collection of idle states reached have the initial Turing machine configuration of $T$ with $Z$ as their implementation configuration, we first prove that the step \textit{init} is deterministic. We then use this to prove the exact result of executing \textit{init}, by using Corollary~\ref{cor: statef}.

\begin{lem}
\label{lem: ndinit}
The execution of \emph{init} in $\impl$ is deterministic.
\end{lem}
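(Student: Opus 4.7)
The plan is to invoke Lemma~\ref{lem:det}: it suffices to show that \emph{init} contains no race condition when it is launched from the initial state $\StateP^0_\impl$. Recall that a race condition requires two \emph{distinct} struct instances to access the same parameter with at least one write, so my strategy reduces to proving that only one struct instance actually executes the body of \emph{init}.

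First I will identify that struct instance. In $\impl$, \emph{init} is declared only inside the \textit{Control} struct. The schedule starts with the global step call \textit{init}, so the transition that launches it must be \textbf{InitG}. Inspecting this rule, commands for \emph{init} are placed only in struct instances whose current command list is $\vempty$ and whose struct type declares \emph{init}. In $\StateP^0_\impl$, the only struct instances present are $\nilL{\mathit{TapeCell}}$ and $\nilL{\mathit{Control}}$, and only the latter has type \textit{Control}. Hence only $\nilL{\mathit{Control}}$ receives the interpretation of the \emph{init} body.

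Next I will argue that no further struct instance ever starts executing \emph{init} during the subsequent transitions leading to the next idle state. The body of \emph{init} creates several new \textit{TapeCell} instances and exactly one new \textit{Control} instance via the constructor statement on the last line of Listing~\ref{ex:init}. By the \textbf{ComCons} rule, each freshly created struct instance is inserted into $\Structs$ with an empty command list $\vempty$ and its own fresh environment; none of them inherits the commands of $\nilL{\mathit{Control}}$, and no further \textbf{InitG}/\textbf{InitL} transition fires until the current step execution finishes (the schedule component of the state does not change during command transitions). Consequently the new instances remain passive throughout the execution of \emph{init}.

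Putting these observations together, the whole execution of \emph{init} proceeds on a single struct instance, namely $\nilL{\mathit{Control}}$, so no pair of distinct struct instances can access a common parameter. Therefore \emph{init} contains no race condition starting from $\StateP^0_\impl$, and by Lemma~\ref{lem:det} it is deterministic. The only subtlety I anticipate is making the argument about freshly created instances fully rigorous, in particular noting explicitly that the constructor command initialises the new struct instances with $\vempty$ as their command list so that they remain idle until the next schedule transition.
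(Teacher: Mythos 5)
Your proof is correct and follows essentially the same route as the paper: reduce to Lemma~\ref{lem:det} by showing that only the null-instance of \textit{Control} ever executes the body of \emph{init}, so no two distinct struct instances can race. You are in fact somewhat more careful than the paper's own two-line argument, which simply asserts that ``only one instance exists''; your explicit observations that $\nilL{\mathit{TapeCell}}$ has no \emph{init} step and that instances freshly created by \textbf{ComCons} start with an empty command list are exactly the details needed to make that assertion rigorous.
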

\begin{proof}
The step \emph{init} is only executed once, at the start of the program, by the null-instance of \textit{Control} (as no other instances exist). As only one instance exists, there cannot be a race condition between two struct instances. Therefore \textit{init} contains no race conditions.
It then follows by Lemma~\ref{lem:det} that \textit{init} is deterministic.
\end{proof}
\begin{lem}[Executing \textit{init} in the initial state]\label{contract: init}
Let $P^0_{\impl}$ be the initial state at the start of executing $\impl$ and let the input string $Z= z_0\ldots z_n$. Executing the step \textit{init} on $P^0_{\impl}$ results in a state $P_1$ with a single non-null \textit{Control} instance such that $(q_0, t_Z)$ is the implementation configuration of $P_1$.
\end{lem}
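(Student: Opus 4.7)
The plan is to execute \textit{init} symbolically and read off its effects statement by statement, then verify the claimed implementation configuration by inspecting the resulting struct environment.

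First, by Lemma~\ref{lem: ndinit}, \textit{init} is deterministic, so I can fix a single interleaving without loss of generality and apply Corollary~\ref{cor: statef} to each statement in turn. The step is executed by the null-instance of \textit{Control}, starting from $P^0_{\impl}$ where only the two null-instances exist. I would first process the $n+1$ variable assignment statements that create $\mathit{cell}_0,\ldots,\mathit{cell}_n$: by part~4 of the corollary, each yields a fresh non-null \textit{TapeCell} instance whose parameters are the default values ($\nilL{\TC}$ for \textit{left} and \textit{right}) except for \textit{symbol}, which is set to $s_i$. Next, the $2n$ update statements $\mathit{cell}_{i+1}.\mathit{left} := \mathit{cell}_i$ and $\mathit{cell}_i.\mathit{right} := \mathit{cell}_{i+1}$ target genuine (non-null) instances, so by part~1 they update \textit{left} and \textit{right} as claimed. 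Finally, the constructor statement for \textit{Control} creates, by part~4 again, a single fresh non-null \textit{Control} instance $c$ with $c.\mathit{head} = \mathit{cell}_0$, $c.\mathit{state} = 0$, and $c.\mathit{accepting} = (q_0 \in F)$. Let $P_1$ be the resulting idle state.

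The main obstacle is verifying $t_{P_1}$ for indices outside $[0,n]$. By Definition~\ref{def: implconf}, for $i > n$ we need $c.\mathit{head}.\mathit{right}^i.\mathit{symbol}$, but the explicit chain only runs out to $\mathit{cell}_n$; traversal then falls off the tape via $\mathit{cell}_n.\mathit{right} = \nilL{\TC}$, which is the unchanged default. Here I would invoke the fact that writes to null-instance parameters are skipped by \textbf{ComWrNSkip}, so throughout the execution $\nilL{\TC}$ retains its defaults: $\nilL{\TC}.\mathit{left} = \nilL{\TC}.\mathit{right} = \nilL{\TC}$ and $\nilL{\TC}.\mathit{symbol} = 0 = B$. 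Hence $c.\mathit{head}.\mathit{right}^i.\mathit{symbol} = 0 = B = t_S(i)$ for all $i > n$, and symmetrically for $i < 0$ using \textit{left}.

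For the remaining cases $0 \leq i \leq n$, a routine induction on $i$ using the connections established above gives $c.\mathit{head}.\mathit{right}^i = \mathit{cell}_i$, so $t_{P_1}(i) = \mathit{cell}_i.\mathit{symbol} = s_i = t_S(i)$. Combined with $q_{P_1} = c.\mathit{state} = 0 = q_0$ and the observation that $c$ is the only non-null \textit{Control} instance created, this shows that the implementation configuration of $P_1$ is exactly $(q_0, t_S)$, as required.
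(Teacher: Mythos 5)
Your proof is correct and follows essentially the same route as the paper's: establish determinism via Lemma~\ref{lem: ndinit}, then walk through \textit{init} sequentially, applying Corollary~\ref{cor: statef} statement by statement to obtain the tape cells, their links, and the single fresh \textit{Control} instance. You are in fact more explicit than the paper on one point it leaves implicit, namely why $t_{P_1}(i) = B$ for $i < 0$ and $i > n$ (the traversal falls into the \textit{TapeCell} null-instance, whose parameters retain their default values).
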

\begin{proof}
First, note that the initial state for $\impl$, as defined in the \Lname semantics, contains the schedule of $\impl$, the null-instances of all structs, and a stability stack. The stability stack has no bearing on this proof, and will be disregarded. From Lemma~\ref{lem: ndinit}, we know that as only the null-instance of \textit{Control} executes \textit{init}, the execution of \textit{init} is deterministic.

Then let $P_1$ be an idle state reached by executing \textit{init} from $P^0_{\impl}$. As written in line 43 of $\impl$, the null-instance of \textit{Control} will create one additional \textit{Control} instance. As no other struct instances execute \textit{init}, it follows that there will be a single non-null \textit{Control} instance in $P_1$. Let this instance be $c$. From line 43 of Listing~\ref{lst: tmex}, we also know that $c$ will have its state set to $0$, which is the representation of $q_0$.

We then prove that the function made according to Definition~\ref{def: implconf} in $P_1$ from the \textit{TapeCells} is $t_Z$. 
Due to Lemma~\ref{lem: ndinit}, we can walk through Listing~\ref{ex:init} sequentially. Then, through multiple applications of Corollary~\ref{cor: statef}, we know that the first part makes one \textit{TapeCell} instance for every $z_i\in Z$, and the second part then connects these to the correct \textit{left} and \textit{right} neighbours. It follows that in $P_1$, there exists a \textit{TapeCell} for all symbols $z_i\in Z$, and no others, s.t. every \textit{TapeCell} $z_i$ is be connected to $z_{i-1}$ and $z_{i+1}$ (if they exist) through parameters \textit{left} and \textit{right} respectively.
The last line of \emph{init} then sets the \textit{head} parameter of $c$ to the \textit{TapeCell} for $z_0$. It follows that the function made according to Definition~\ref{def: implconf} is indeed $t_Z$.

Then the lemma holds: the implementation configuration of $P_1$ is $(q_0, t_Z)$. \end{proof}

We then prove that taking a transition in the $\impl$ has the same effect as taking the same transition in $T$ with input string $Z$. We do this by first establishing that the step \textit{transition} taken from a state with a single non-null \textit{Control} instance is deterministic, after which we prove that all executions of \textit{transition} taken in $\impl$ are deterministic (as there will always be only a single \textit{Control} instance). We then prove that executing some transition from a configuration $(q, t)$ in $\impl$ has the same resulting configuration as executing the same transition from $(q, t)$ in $T$ with $Z$. As a part of this, we also prove that no new \textit{Control} instances are made, as that would mean that the idle states reached would not have an implementation configuration.

\begin{lem}
\label{lem: ndtransition}
Let $P$ be an idle state reachable in $\impl$ with a single non-null \textit{Control} instance.
Any execution of \emph{transition} executed from $P$ is deterministic.
\end{lem}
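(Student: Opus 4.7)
The strategy is to invoke Lemma~\ref{lem:det}: it suffices to show that \emph{transition}, launched from $P$, contains no race conditions. Rule \textbf{InitG} schedules a step only on instances whose struct declares it, so only the two \emph{Control} instances present in $P$ become active: the null-instance $\ell^0_{\mathit{Control}}$ and the unique non-null Control $c$ given by the assumption on $P$. The \emph{TapeCell} instances receive no commands and contribute no accesses, so the race-condition analysis reduces to the interaction between these two Control instances.

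Next, I would enumerate the parameter accesses of each executing instance. The instance $c$ touches only $c$'s own parameters (state, accepting, head) and those of the cell $c.\mathit{head}$ (symbol, right); its writes to $c$'s own parameters are genuine by \textbf{ComWr}, and its writes through $c.\mathit{head}$ are genuine exactly when $c.\mathit{head}\notin\Labels^0$. The instance $\ell^0_{\mathit{Control}}$ always dereferences $\textbf{this}$ and its head, whose value is the default label $\ell^0_{\mathit{TapeCell}}$, so every access it performs targets a parameter of one of the two null-instances $\ell^0_{\mathit{Control}}$ and $\ell^0_{\mathit{TapeCell}}$. Moreover, the guard $\mathit{head}\neq\nil$ in the constructor branch evaluates to $\false$ at $\ell^0_{\mathit{Control}}$, so the null-instance never enters that branch.

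The decisive observation is that every syntactic write attempted by $\ell^0_{\mathit{Control}}$ targets a null-instance parameter and therefore fires \textbf{ComWrNSkip} rather than \textbf{ComWr}; by the definition of access preceding Lemma~\ref{lem:det}, this is not counted as an access. Hence $\ell^0_{\mathit{Control}}$ contributes only reads to the race-condition bookkeeping. The same collapse applies to any write of $c$ through a $c.\mathit{head}$ equal to $\ell^0_{\mathit{TapeCell}}$, so the only remaining write accesses of $c$ target $c$ itself, a non-null TapeCell, or a fresh label produced by \textbf{ComCons}; none of these coincide with $\ell^0_{\mathit{Control}}$ or $\ell^0_{\mathit{TapeCell}}$. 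Thus no pair of distinct instances accesses the same parameter with at least one write, so there is no race condition, and Lemma~\ref{lem:det} yields determinism. The main obstacle is conceptual rather than computational: one has to separate cleanly a syntactic write from an actual memory access, noticing that \textbf{ComWrNSkip} eliminates precisely the writes of $\ell^0_{\mathit{Control}}$ that would otherwise race with reads or writes of $c$, and that the guarded constructor call is unreachable for the null-instance.
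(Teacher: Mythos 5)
Your proof is correct and follows essentially the same route as the paper's: reduce determinism to absence of race conditions via Lemma~\ref{lem:det}, observe that only the two \textit{Control} instances execute and that every access of the null-instance targets a null-instance parameter, and conclude that no race can occur because writes to null-parameters go through \textbf{ComWrNSkip} and hence do not count as accesses. Your version merely spells out a few details (the unreachable constructor branch, the case $c.\mathit{head}\in\Labels^0$) that the paper leaves implicit.
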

\begin{proof}
As per Lemma~\ref{lem:det}, it suffices to prove that the execution contains no race conditions.
Due to the definition of \textit{transition} (Listing~\ref{ex:clause}), only one clause of the step will be executed. Let $c$ be this clause.
As there is only one non-null \textit{Control} instance, let $c_0$ be the null-instance of \textit{Control} and let $c_1$ be the non-null instance of \textit{Control}.

During the execution of \emph{transition} by the null-instance of \textit{Control} $c_0$, all of the updates executed by $c_0$ will be updates to parameters of null-instances (recall that $c_0.\mathit{head} = \nil$).
As there is only one non-null \textit{Control} instance $c_1$, any race condition must involve both $c_0$ and $c_1$, and must then be a race for a null-parameter. However, race conditions require at least one write to be performed to the variable and writes to null-parameters are not performed (as per the semantic rule \textbf{ComWrNSkip}). It follows that the execution contains no race conditions.
\end{proof}
\begin{lem}
\label{lem: dettrans}
Every \emph{transition} step executed in $\impl$ is deterministic.	
\end{lem}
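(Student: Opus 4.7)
The plan is to induct on the number of executions of the \emph{transition} step that have occurred in $\impl$, using Lemma~\ref{lem: ndtransition} as the per-execution determinism statement and maintaining the invariant that the idle state preceding each \emph{transition} execution contains exactly one non-null \textit{Control} instance.

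For the base case, the very first execution of \emph{transition} is preceded in the schedule by \emph{init}. By Corollary~\ref{cor:prog}, execution of $\impl$ does not get stuck, so an idle state is reached after \emph{init}; by Lemma~\ref{contract: init}, this idle state contains a single non-null \textit{Control} instance. Lemma~\ref{lem: ndtransition} then yields determinism of the first \emph{transition} execution.

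For the inductive step, assume the $k$-th execution of \emph{transition} is deterministic and is taken from an idle state with exactly one non-null \textit{Control} instance $c$. I need to show the $(k{+}1)$-th execution is taken from such a state as well. By Corollary~\ref{cor:prog}, the $k$-th execution terminates in an idle state. Inspecting the code of \emph{transition} (Listings~\ref{ex:clause} and \ref{ex:transition}), the only constructor statements that occur build \textit{TapeCell} instances; no constructor for \textit{Control} appears anywhere in \emph{transition}. By Corollary~\ref{cor: statef}, new struct instances are introduced \emph{only} by constructor statements. Hence neither $c$ nor the null-instance of \textit{Control} (which executes the same clauses) can create a \textit{Control} instance, so the number of non-null \textit{Control} instances after the $k$-th execution is still exactly one. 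Consequently the idle state preceding the $(k{+}1)$-th execution satisfies the hypothesis of Lemma~\ref{lem: ndtransition}, and that execution is deterministic as well.

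The argument is mostly bookkeeping; the only subtle point is being explicit that the null-instance of \textit{Control} is also eligible to execute \emph{transition} via the global step call in the schedule, and that it likewise cannot spawn a new non-null \textit{Control}. Since its clauses contain only \textit{TapeCell} constructors (and any writes to its own parameters are dropped by \textbf{ComWrNSkip}), the single-non-null-\textit{Control} invariant is preserved, which closes the induction.
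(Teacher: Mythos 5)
Your proof is correct and follows essentially the same route as the paper's: an induction maintaining the invariant that each execution of \emph{transition} starts from an idle state with exactly one non-null \textit{Control} instance, with Lemma~\ref{contract: init} for the base case, Lemma~\ref{lem: ndtransition} for per-execution determinism, and Corollary~\ref{cor:prog} plus the absence of \textit{Control} constructors in the clauses for the inductive step. Your explicit remark about the null-instance of \textit{Control} also executing the step is a fine (and harmless) addition; the only detail the paper spells out that you leave implicit is that the schedule transitions between executions do not alter the struct environment.
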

\begin{proof}
We prove by induction that every execution of \textit{transition} in $\impl$ starts from a state with a single non-null \textit{Control} instance.
As a base case, the first execution of \textit{transition} starts from $P_1$ as defined in Lemma~\ref{contract: init}, which has a single non-null \textit{Control} instance.

Then, let $P_i$ be a state with a single non-null \textit{Control} instance from which \textit{transition} is executed for the $i$th time. Due to Lemma~\ref{lem: ndtransition}, we know that the execution of \textit{transition} is deterministic, so we can consider the sequential execution of \textit{transition}. The step \textit{transition} is made up of multiple mutually exclusive clauses by definition. Let $C$ be the clause executed, w.l.o.g., for this execution of \textit{transition}. 
By the definition of $C$, in none of the statements in $C$ a \textit{Control} instance is created, as seen in Listing~\ref{ex:clause}. 
It follows that the state $P_i'$ resulting from the execution of \textit{transition} will also have only a single non-null \textit{Control} instance.
By Corollary~\ref{cor:prog}, we know that eventually, from $P_i'$, a state $P_{i+1}$ is reached s.t. either the schedule is empty or \textit{transition} is executed again. If the schedule is empty in $P_{i+1}$, every execution of the step \textit{transition} in $\impl$ will have been executed from a state with a single non-null \textit{Control} instance. Else, note that the Schedule Transitions as defined in the semantics of $\impl$ (Figure~\ref{fig: semSystem}) do not change the struct environment, so $P_{i+1}$ will again have a single non-null \textit{Control} instance.

It follows that every step \textit{transition} executed in $\impl$ will be executed from a state with a single non-null \textit{Control} instance. By Lemma~\ref{lem: ndtransition}, the lemma follows.
\end{proof}
\begin{lem}[Effect of a \textit{transition} execution]
\label{contract: transition}
Let $P$ be an idle state of $\impl$ from which \emph{transition} can be executed and let $(q,t)$ be the implementation configuration of $P$. Let $(q, t)$ also be the current Turing machine configuration of $T$ with input string $Z$.
Then the result of a transition in $T$ is a configuration $(q', t')$ iff the result of executing the \emph{transition} step from $P$ in $\impl$ is an idle state $P'$ such that $(q',t')$ is its implementation configuration.
\end{lem}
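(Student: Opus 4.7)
The plan is to use the determinism of both sides to turn the iff into a single equality of results. By Lemma~\ref{lem: dettrans}, executing \emph{transition} from $P$ yields a unique idle state $P'$, and $T$ is deterministic by assumption, so I need to show that the (unique) Turing transition result exists precisely when a clause of \emph{transition} fires, and that in that case the two resulting configurations coincide. The null-instance of \textit{Control} also executes \emph{transition}, but by the argument in Lemma~\ref{lem: ndtransition} all its writes land on null-parameters and are discarded via \textbf{ComWrNSkip}, so its execution leaves the implementation configuration unchanged.

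First I would identify which clause fires. The unique non-null \textit{Control} instance $c$ in $P$ satisfies $c.\mathit{state} = q$ and $c.\mathit{head}.\mathit{symbol} = t(0)$ by Definition~\ref{def: implconf}. The clauses of \emph{transition} are mutually exclusive via the if/else-if chain (Listing~\ref{ex:transition}), and by construction there is exactly one clause for each pair $(q'', s'') \in Q \times \Gamma$ on which $\delta$ is defined. Hence the clause for $(q, t(0))$ fires iff $\delta(q, t(0))$ is defined; otherwise no clause fires, $P'$ has the same implementation configuration as $P$, and the Turing machine likewise makes no transition. For the substantive case, suppose $\delta(q, t(0)) = (q', s', D)$. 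I would apply Corollary~\ref{cor: statef} statement by statement to Listing~\ref{ex:clause} (justified by the determinism granted by Lemma~\ref{lem: dettrans}): the three initial updates set $c.\mathit{head}.\mathit{symbol} := s'$, $c.\mathit{state} := q'$, and $c.\mathit{accepting} := (q' \in F)$; the conditional constructor appends a fresh blank \textit{TapeCell} (with symbol $0 = B$) on the far side of the head exactly when no such neighbor yet exists; and the final update shifts $c.\mathit{head}$ by one in direction $D$. Rebuilding the implementation configuration in $P'$ via Definition~\ref{def: implconf} through the shifted head then yields exactly the $t'$ of Definition~\ref{def: conf}, which I would verify for $D = R$ by splitting on $i > 0$, $i = 0$, and $i < 0$ (with the $D = L$ case being symmetric).

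The main obstacle I expect is the index-shift bookkeeping for the tape function. After the head movement, every occurrence of $c.\mathit{head}.\mathit{left}^{-i}.\mathit{symbol}$ in $t_{P'}$ must be matched against $c.\mathit{head}.\mathit{right}^{i-1}.\mathit{symbol}$ from $t_P$ (for $D = R$), and the newly written $s'$ must land at the correct shifted index $i = 1$. A secondary subtlety is that when the head moves onto a freshly created cell, I must verify that the blank symbol $0$ supplied to the constructor coincides with $B$, so that cells not previously represented in $P$ indeed yield the value $t(i) = B$ expected by Definition~\ref{def: conf}. The backward direction of the iff then follows by determinism: any implementation configuration of $P'$ reachable by executing \emph{transition} must coincide with the unique outcome computed above, which matches the Turing-machine successor.
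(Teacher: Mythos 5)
Your overall strategy is the same as the paper's: invoke determinism (Lemma~\ref{lem: dettrans}) to reduce the iff to a single sequential walkthrough of the unique firing clause, apply Corollary~\ref{cor: statef} statement by statement to the template of Listing~\ref{ex:clause}, and then check that the tape function rebuilt via Definition~\ref{def: implconf} around the shifted head equals the $t'$ of Definition~\ref{def: conf}. Your explicit treatment of the case where $\delta(q,t(0))$ is undefined and of the null-instance of \textit{Control} (whose writes are discarded by \textbf{ComWrNSkip}) is a reasonable elaboration of points the paper compresses into one sentence.

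However, the index bookkeeping in your final paragraph is wrong, and since you flag it as the crux of the verification it would derail the proof if carried out as written. For $D=R$ the head moves right, so the old head cell --- the one whose \textit{symbol} was just set to $s'$ --- sits one position to the \emph{left} of the new head; hence $s'$ lands at index $i=-1$, not $i=1$ (the index $i=1$ is the $D=L$ case in Definition~\ref{def: conf}). Correspondingly, the paper's proof establishes $t''(-1)=s'$ and $t''(i)=t(i+1)$ for $i\neq -1$, which in reference-chain terms means $c.\mathit{head}.\mathit{left}^{-i}$ in $P'$ matches $c.\mathit{head}.\mathit{left}^{-i-1}$ in $P$ for $i\le -2$ (and the old head itself for $i=-1$); your proposed match against $c.\mathit{head}.\mathit{right}^{i-1}$ involves a negative exponent of \textit{right} for $i<0$ and does not typecheck. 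Fix the sign and the rest of your argument goes through exactly as in the paper.
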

\begin{proof}
Let $P$ and $(q, t)$ be as defined in the lemma. Then by definition of $\impl$, there exists a single transition that can be executed from $P$ for $\impl$ iff $\delta(q, t)$ is defined. Additionally, $\delta$ is a partial function, so if $\delta(q, t)$ is defined for $(q, t)$ it is always uniquely defined for $(q, t)$. 

Let $\delta(q, t) = (q', s', D)$. Let $D = R$ (the proof of $D = L$ is analogous). Then as a transition in $T$ with $Z$ is deterministic, by Definition~\ref{def: conf}, the resulting Turing machine configuration of taking a transition from $(q, t)$ is $(q', t')$, with $$t'(i) = \begin{cases}
	z'&\text{if } i = -1\\
	t(i+1) &\text{otherwise}
\end{cases}.$$

Taking the transition from $P$ for $\impl$ is also deterministic (Lemma~\ref{lem: dettrans}), and therefore we can walk through the statements of the clause to determine its effect. By definition of $\impl$, this clause is based on the template shown in Listing~\ref{ex:clause}. Let $c$ be the single \textit{Control} instance that exists in $P$, and let $h$ be the \textit{TapeCell} instance which is referenced in the \textit{head} parameter of $c$. As per multiple applications of Corollary~\ref{cor: statef}, the result of the transition is that the \textit{state} of $c$ is updated to $q'$, the \textit{symbol} of $h$ is updated to $z'$, the \textit{accepting} parameter of $c$ is updated to whether $q'\in F$ and that the \textit{head} parameter of $c$ shifts one \textit{TapeCell} to the right (making a new \textit{TapeCell} if required).

Creating a function of the \textit{TapeCells} as in Definition~\ref{def: implconf} then results in function $t''$, s.t. $t''(-1) = z'$ and $t''(i) = t(i+1)$ for all $i \neq -1$. It follows that $t'' = t'$. Then, by Definition~\ref{def: implconf}, the implementation configuration of the resulting state is $(q', t')$.
\end{proof}		
\begin{thm}
\label{thm: tc}
\Lname is Turing complete.
\end{thm}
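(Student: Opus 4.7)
The plan is to combine the three lemmas just established (Lemmas~\ref{contract: init}, \ref{lem: dettrans}, and \ref{contract: transition}) to show that $\impl$ faithfully simulates $T$ on $S$. Since $T$ was an arbitrary Turing machine and $S$ an arbitrary input string, and since Church--Turing says every effectively computable function is realisable by such a $T$, this simulation result will immediately yield the theorem.

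I will proceed by induction on the number of Turing machine transitions taken by $T$ from its initial configuration. The base case is handled by Lemma~\ref{contract: init}: executing \textit{init} from $P^0_{\impl}$ yields an idle state $P_1$ with a single non-null \textit{Control} instance whose implementation configuration is exactly the initial Turing machine configuration $(q_0, t_S)$. For the inductive step, suppose that after $k$ transitions $T$ is in configuration $(q_k, t_k)$ and $\impl$ has reached an idle state $P_k$ with this same implementation configuration (and still a single non-null \textit{Control} instance, which is maintained by the invariant argument inside the proof of Lemma~\ref{lem: dettrans}). Lemma~\ref{contract: transition} then supplies the matching step: the $(k{+}1)$-th transition of $T$ produces $(q_{k+1}, t_{k+1})$ precisely when executing \textit{transition} on $P_k$ reaches an idle state $P_{k+1}$ with that same implementation configuration.

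The subtle point, and the one I expect to be the main obstacle, is matching $T$'s termination with fixpoint stabilisation of $\impl$. The clauses of \textit{transition} are pairwise mutually exclusive and together cover exactly the defined domain of $\delta$, so no clause fires precisely when $\delta(q_k, t_k(0))$ is undefined, i.e.\ when $T$ halts. If no clause fires, the unique non-null \textit{Control} instance performs no writes, and any writes attempted by the null-instance are dropped by \textbf{ComWrNSkip} (as in the proof of Lemma~\ref{lem: ndtransition}); by Corollary~\ref{cor: statef} the stability stack therefore stays \textit{true}, the \textbf{FixTerm} rule fires, and $\impl$ terminates. Conversely, whenever a clause fires, a non-null parameter is written, forcing another iteration; together with the progress guarantee of Corollary~\ref{cor:prog} this shows that $\impl$ iterates exactly as long as $T$ does.

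Finally, acceptance can be read off directly: every clause sets \textit{accepting} to the Boolean $(q' \in F)$, so at the terminal idle state the \textit{accepting} parameter of the unique non-null \textit{Control} instance is \textit{true} iff $T$ halts in a state of $F$, i.e.\ iff $T$ accepts $S$. Hence $\impl$ correctly simulates $T$ on $S$ for every choice of $T$ and $S$, and \Lname is Turing complete. Aside from the halting-vs-stability correspondence described above, the argument is a routine induction layered on top of the three lemmas of this section.
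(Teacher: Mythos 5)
Your proof follows essentially the same route as the paper's: an induction whose base case is discharged by Lemma~\ref{contract: init} and whose step case is discharged by Lemma~\ref{contract: transition}, concluding that $\impl$ and $T$ with $S$ have the same behaviour. Your additional discussion of the halting-versus-fixpoint-stabilisation correspondence and of reading off acceptance from the \textit{accepting} parameter is detail the paper's own proof leaves implicit, but it does not change the structure of the argument and is correct.
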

\begin{proof}
	We prove this by induction, with as base case that after initialization, both the $\impl$ and the turing machine $T$ with input string $Z$ have configuration $(q_0, t_Z)$. For $T$ with $Z$, this follows by definition, and for $\impl$, this follows from Lemma~\ref{contract: init}. 
	Then, as the step case, we prove that when $T$ with $Z$ and $\impl$ both start from configuration $(q, t)$ and take the same transition, both end up in configuration $(q', t')$. This follows from Lemma~\ref{contract: transition}. 
	We conclude that $T$ with $Z$ and $\impl$ have the same behaviour, and therefore $\impl$ is a correct implementation of $T$ with $Z$ in \Lname. Therefore, \Lname is Turing complete.
\end{proof}

\section{Adaptations}\label{sec: adapt}
In this section, we discuss several potential adaptations for the \Lname semantics which increase practical expressivity. First, we discuss the adaptation of the schedule to allow for more diverse looping structures, and then we will discuss the addition of array-like structures to \Lname. This will include additions to the syntax and semantics of \Lname.
Here, we consider the adaptations in this section separately from each other, and that for all adaptations we extend the original semantics of \Lname, as discussed in Section~\ref{sec:AuD}.
At the end of the section, we include a short discussion on these additions.

\subsection{Schedule Adaptations}
In \Lname, the schedule controls the flow of the program, using steps, barriers and fixpoints, as discussed in Section~\ref{sec:AuD}. 

The fixpoint has two main components as a looping structure. First, it requires synchronisation between all struct instances in each iteration of the loop. Second, it requires that the whole system is stable for the loop to terminate. 
For the schedule adaptations, we will first relax the second requirement, introducing a possible addition of \emph{parameter-specific fixpoints}, which only require certain parameters to be stable. Then, we will relax the first requirement and give a possible addition of \emph{iterators}, which removes the synchronisation requirement for fixpoints.

\subsubsection{Parameter-specific fixpoints.} In \Lname, fixpoints require that after the execution of the fixpoint every parameters in the entire system is stable. This is convenient theoretically, as this is a strong guarantee for the contents of a state after fixpoint termination. In practice, however, it could be useful to limit the relevant parameters for the fixpoint. 

To illustrate this, we have created a program based on the reachability example given earlier. It is displayed in Listing~\ref{lst: parex}. 
The goal of this program is not only to compute which nodes are reachable, but also to count the iterations necessary to do so. 
The program does this by counting the iterations of the fixpoint in a parameter $\mathit{count}$.

This program, however, will never become stable, as \textit{count} is a parameter. Furthermore, as the value of \textit{count} stays relevant between iterations of the fixpoint, \textit{count} must be a parameter.
The only way to mitigate that this program never becomes stable, is to make the program more complicated by computing stability of the graph during the step reachability and somehow making that available to all edges, which would be undesirable.
To that end, it may be useful to limit the relevant parameters for the fixpoint to exclude the parameter \textit{count}, which is not stable by design.
\begin{lstlisting}[float=t, caption={Counting Reachability.}, label={lst: parex}]
struct Node (reach: Bool) {}

struct Edge (in: Node, out: Node, diam: Int) {
	reachability {
		count := count + 1;
		if (in.reach = true) then { 
			out.reach := true;
		}
	}
	init {
		Node node1 := Node(true);
		Node node2 := Node(false);
		Node node3 := Node(false);
		Node node4 := Node(false);
		Node node5 := Node(false);
		
		Edge edge12 := Edge(node1, node2, -1);
		Edge edge13 := Edge(node1, node3, -1);
		Edge edge23 := Edge(node2, node3, -1);
		Edge edge34 := Edge(node3, node4, -1);
		Edge edge51 := Edge(node5, node1, -1);
	} 
}

init < Fix(reachability)\end{lstlisting}

To limit the relevant parameters for a fixpoint, we add \emph{parameter-specific fixpoints}: special fixpoints which allow us to specify the parameters the fixpoint requires to be stable. Note that we still consider fixpoints unstable when new struct instances are made during the execution of an iteration of the fixpoint. 
For this, we expand the syntax of schedules to allow fixpoints of the form: ``$\mathit{Fix}(\mathit{sc}: x_1, x_2, \ldots)$'', where $\mathit{sc}$ is the subschedule to be executed and $x_1, x_2, \ldots$ are the variables which to be stable for the fixpoint to terminate.
Note that this could be straightforwardly extended in the case the variables need to be uniquely identified with their struct types.
In our example in Listing~\ref{lst: parex}, we replace the fixpoint with ``$\mathit{Fix}(\mathit{reachability}: \mathit{reach})$'', which makes the program work as intended.

To allow the parameter-specific fixpoints in the syntax of \Lname, we have to add an additional entry $\word{Fix} \word{(} \type{Sched}\word{:} \type{Vars} \word{)}$ to the syntax for the schedule, which represents a parameter specific fixpoint. Here, $\type{Vars}$ gives the parameters relevant for this fixpoint, with $\type{Vars} ::= \type{Var}\type{AVars}$ and $\type{AVars} ::= \word{,} \type{Var}\type{AVars} \mid \vempty$. 

Let $\Program$ be a program for which the semantics is defined. 
We then change the definition of the semantics as follows to be able to use parameter-specific fixpoints.
First, we adapt the state to include a \emph{stability function}, which is a total function which keeps track of which fixpoints depend on which variables. 
This function has type $\mathbb{N}\rarr \Par_\Program\rarr\Bool$ and is initialized as $\false$ for all input pairs $(i, x)\in \mathbb{N}\times \mathit{Par}_\Program$. This function is called during a write to check whether this write is relevant for the stability of the fixpoints. To avoid unnecessary complication, we assume that every parameter name is unique; to remove this assumption, make the stability function type $\Nat\to(\mathit{Type}\times\mathit{Par}_\Program)\to\Bool$ instead.
The function $\mathit{sf}$ can be seen as a stack of functions, where $\mathit{sf}(k)$ denotes which parameters are relevant for the $k$th fixpoint currently initialized.
The new state is then defined as a four-tuple $\langle \Sched, \Structs, \Stab, \mathit{sf}\rangle$, with $\Sched$ the schedule, $\Structs$ the struct type environment, $\Stab$ the stability stack and $\mathit{sf}:\Nat \rarr \Par_\Program \rarr \Bool$ the \emph{stability function}. 

For our initial state, we define the function $\mathit{sf}^0$, which returns false for all input pairs $(i, x)\in \Nat\times\Par_\Program$, which is added as the fourth element of the initial state.

For the transition rules, we define the function $\mathit{sf}^0_t: \Par_\Program\rarr \Bool$, which returns true for all inputs $x\in\Par_\Program$. 		
We change the rule \textbf{ComWr} to the following rule \textbf{ComWrN}, which ensures that the stability stack is only reset when the relevant variables are changed. The changed parts of the rule are highlighted.
\[
\inference[(\textbf{ComWrN})]{
	\Structs(\ell) = \langle \StructType, \writev{x};\CList, \Stack;v;\ell',
	\Env\rangle \\
	\Structs(\ell') = \langle \StructType', \ComList', \Stack', \Env'\rangle \\
	\ell' \notin \Labels^{0} \lor x\notin \Par(\StructType', \Program) \\
	\mathit{su} = (x\notin\Par(\StructType', \Program)\lor\Env'(x) =v) \\
}{
	\begin{aligned}
		\langle \Sched, \Structs, \Stab , &\highlight{\mathit{sf}}\rangle\Rightarrow\langle \Sched, \Structs[\ell \mapsto \langle \StructType, \CList, \Stack, \Env\rangle][\ell', 4\mapsto \Env'[x\mapsto v]],\\
		&\Stab_1 \land \highlight{(\mathit{su} \lor \neg\mathit{sf}(1)(x))};\dots;\Stab_{|\Stab|} \land \highlight{(\mathit{su}\lor \neg\mathit{sf}(|\Stab|)(x))}, \highlight{\mathit{sf}}\rangle
	\end{aligned}
}
\]
In \textbf{ComWrN}, as in \textbf{ComWr}, $\mathit{su}$ is true when $x$ is not a parameter changed by the application of this rule. However, due to the addition of parameter-specific fixpoints, even if $x$ is a changed parameter, whether the entries of the stability stack	need to be reset to $\false$ also depends on whether $x$ is relevant for the entries. Let $\Stab_k$ be one of the entries of the stability stack. Then whether $x$ is relevant for the $k$th currently initialized fixpoint is saved in $\mathit{sf}(k)(x)$: if $x$ is relevant, $\mathit{sf}(k)(x) = \true$. Then $\Stab_k$ should only be reset to false (if it isn't false already) if $x$ is a relevant, changed parameter, which is reflected in updating $\Stab_k$ to $\Stab_k \land (\mathit{su} \lor\mathit{sf}(k)(x))$.

We also adapt the rule \textbf{FixInit} into the rules \textbf{FixInitG} (for the general case) and \textbf{FixInitS} (for parameter-specific fixpoints), which properly initiates the function $\mathit{sf}$ for every fixpoint. Let $X\subseteq \Par_\Program$ be some list of parameters. The differences between the new rules and \textbf{FixInit} are highlighted.
\[
\inference[(\textbf{FixInitG})]{
	\mathit{Done}(\sigma)\\
	\highlight{k = |\Stab|+1}\\
}{
	\langle \mathit{Fix}(F);F_1, \Structs, \Stab, \highlight{\mathit{sf}} \rangle\Rightarrow \langle F;\mathit{aFix}(F);F_1, \Structs,\Stab;\true, \highlight{\mathit{sf}[k\mapsto \mathit{sf}^0_t}]\rangle
}
\]
\[
\inference[(\textbf{FixInitS})]{
	\mathit{Done}(\sigma) &
	\highlight{k = |\Stab|+1}\\
	\highlight{X = x_1, \ldots, x_n} &
	\highlight{x_1, \ldots, x_n \in \Par_\Program}\\
	\highlight{\mathit{sf}' = \mathit{sf}[x\mapsto \mathit{sf}^0_f[\{x_1\mapsto \true, \ldots, x_n\mapsto \true\}]]}
}{
	\langle \mathit{Fix}(F\highlight{: X});F_1, \Structs, \Stab, \highlight{\mathit{sf}} \rangle\Rightarrow \langle F;\mathit{aFix}(F);F_1, \Structs,\Stab;\true, \highlight{\mathit{sf}'}\rangle
}
\]
In these rules, we correctly initialize normal fixpoints to fixpoints dependent on every parameter (as shown by setting $\mathit{sf}(k)$ to $\mathit{sf}^0_t$) and parameter specific fixpoints to fixpoints only dependent on the given parameters and no others (as shown by the definition of $\mathit{sf}'$ and overwriting $\mathit{sf}(k)$ with $\mathit{sf}'$.
We adapt the rules \textbf{FixIter} and \textbf{FixTerm} into the rules \textbf{FixIterN} and \textbf{FixTermN} as follows. Differences are again highlighted.
\[
\inference[(\textbf{FixIterN})]{\mathit{Done}(\sigma)}{
		\langle \mathit{aFix}(\mathit{sc})<\mathit{sc}_1, \Structs,  \Stab;\false, \highlight{\mathit{sf}} \rangle\sr\qquad\langle \mathit{sc}<\mathit{aFix}(\mathit{sc})<\mathit{sc}_1, \Structs, \Stab;\true, \highlight{\mathit{sf}}\rangle
}
\]
\[
\inference[(\textbf{FixTermN})]{\mathit{Done}(\sigma)}{
	\langle \mathit{aFix}(\mathit{sc})<\mathit{sc}_1, \Structs, \Stab;\true, \highlight{\mathit{sf}} \rangle\sr\langle \mathit{sc}_1, \Structs, \Stab, \highlight{\mathit{sf}}\rangle
}
\]
These rules are not significantly different from \textbf{FixIter} and \textbf{FixTerm}, as both resetting and initializing levels of $\mathit{sf}$ happens simultaneously in \textbf{FixInitG} and \textbf{FixInitS}.
For the command rules, we then also add $\mathit{sf}$ to both the state before and after the transition, much like in \textbf{FixIterN} and \textbf{FixTermN}.

\subsubsection{Iterators.} In \Lname, fixpoints require that between iterations, there is a synchronisation for all struct instances, through the use of the $\mathit{Done}$-predicate in the transition rules. This is due to the fact that the rules \textbf{FixIter} and \textbf{FixTerm} require the $\mathit{Done}$-predicate to be true for the state it is taken from in the semantics of \Lname (see also Section~\ref{sec:AuD}).
This makes the program more understandable, as it provides a strong guarantee for the status of variables between the executions of the fixpoint. 

However, this also inhibits performance, as synchronisation leads to overhead and there are programs for which a strong guarantee on the status of variables between executions is not necessary for understandability.
For example, consider again Listing~\ref{lst: reach} for computing reachability.
While we could synchronise between the different iterations of the reachability step, this does not lead to a better understanding, less write-write race conditions or other benefits.
Though a straightforward extension may be an alternative barrier which does not require synchronisation, we believe that this will remove much of the structure and simplicity of \Lname. 
Therefore, we instead introduce an unsynchronised loop: the \emph{iterator}, where the lack of synchronisation is contained within the iterator only.

The iterator takes the form $\mathit{Iter}(F_1;F_2;F_3;\ldots)$, with every occurrence $F_i$ a step name. For this extension, we do not allow iterators or fixpoint nested into iterators (though an iterator nested into a fixpoint is allowed).
During execution, the iterator will loop over these steps. With $\mathcal{S}_i$ the statements of the step $F_i$, every iteration, all struct instances execute the commands generated by $\interp{\mathcal{S}_1;\mathcal{S}_2;\mathcal{S}_3;\ldots}$, if they have any.
If any struct instances which had commands are finished executing their commands, the iterator can check whether the system is stable. 
If it is not, then the iterator will make sure that for all struct instances that have to execute commands, there is at least one more iteration of the commands at the end of the command list.
The iterator terminates when there are no more commands to execute and the system is stable.
This results in a loop where struct instances asynchronously execute the sequence of steps included in the iterator until the system is stable.

For example, if we were to execute the step $\mathit{reachability}$ with an iterator, we would put ``$\mathit{Iter}(\mathit{reachability})$'' in the schedule. 
This would cause the struct instances to asynchronously execute multiple iterations of the step $\mathit{reachability}$, until the system is stable.
Here, as the struct instances of $\mathit{Node}$ do not execute any commands for the step $\mathit{reachability}$, they are not taken into account for the iterator.
If we had another step $\mathit{step}$, we could then also call ``$\mathit{Iter}(\mathit{reachability};\mathit{step})$''.
This would cause the struct instances to asynchronously execute multiple iterations of the statements of $\mathit{reachability}$ followed by the statements of $\mathit{step}$ (if the struct instance has commands for these steps), until the system is stable.

To implement our iterator, we extend \Lname's syntax by adding the option $\word{Iter}\linebreak\word{(}\type{StepList}\word{)}$ for the schedule, where $\type{StepList} ::= \type{Id} \mid \type{Id};\type{StepList}$, where the $\type{Id}$ occurrences are type checked to be step names.
This step list holds the steps to be executed during the iterator, which are executed without synchronisation in between.

In the remainder of this section, let $\Program$ be some program.
We need to add transition rules to be able to implement the iterator, which are based on the transition rules for fixpoints.
First, we add a rule to initialize iterators.
This rule, \textbf{IterInit}, initialises the iterator and also introduces the meta-schedule element \emph{aIter}, which denotes an iterator which has been started but has not finished yet.
We use $F^{+}$ to denote a list of one or more steps.
We define the function $I(\ell, F^{+})$ with $\ell\in\Labels$ and $F^{+} = F_1;\ldots;F_n$, which returns the list of commands $\interp{\StatList_{\StructType_\ell}^{F_1}};\ldots;\interp{\StatList_{\StructType_\ell}^{F_n}}$.
Recall from Section~\ref{sec:AuD} that $\mathcal{C}$ is the set of all commands.
Changes from the rule \textbf{FixInit} are highlighted.

\[
\inference[(\textbf{IterInit})]{\mathit{Done}(\sigma)
}{
	\begin{aligned}
		& \langle \highlight{\mathit{Iter}(F^{+})} < \mathit{sc}_1, \Structs, \Stab \rangle\sr \langle \highlight{\mathit{aIter}(F^{+})} < \mathit{sc}_1,\\
		& \qquad\qquad\Structs[\{\ell \mapsto \langle \StructType_\ell, \highlight{I(\ell, F^{+})}, \vempty, \Env_\ell, \true\rangle \mid \Structs(\ell) = \langle \StructType_\ell, \ComList_\ell, \Stack_\ell, \Env_\ell\}],\Stab;\true\rangle
	\end{aligned}
}
\]
In the second rule, we define when an iterator needs to do another iteration. This rule works as follows:
When a struct instance is done executing all steps of the iterator, and the stability stack is false, then either some other struct instance caused this instability or the current struct instance did. As we cannot discount either option, it follows that all struct instances should execute the steps from the iterator at least one more time before the iterator can be considered stable. Therefore, the rule tops up the commands of all struct instance s.t. their command lists end in exactly one full list of commands as generated by the steps in the iterator. The rule also sets the stability stack to $\true$. This is safe: every struct instance will execute the steps at least one more time, so the execution of the iterator cannot stop before at least one full iteration has happened in which no parameter was changed and no struct instances were created. This means that the iterator successfully computes a fixed point when it terminates with this iteration transition rule. Note that in this rule, $\ell'$ is universally quantified.

\[
\inference[(\textbf{IterIter})]{
	\highlight{\exists \ell\in\Labels.(\Structs(\ell) = \langle \StructType, \vempty, \Stack, \Env\rangle \land
	I(\ell, F^{+}) \neq \vempty)}
}{
	\begin{aligned}
		& \langle \mathit{aIter}(F^{+})< \mathit{sc}_1, \Structs, \Stab;\false \rangle\sr \langle \mathit{aIter}(F^{+})< \mathit{sc}_1,\\
		& \qquad\qquad\Structs[\{\ell' \mapsto \langle \StructType_{\ell'}, \ComList_{\ell'};I(\ell', F^{+}), \Stack_{\ell'}, \Env_{\ell'}\rangle \mid\\
		&\qquad\qquad \Structs(\ell') = \langle \StructType_{\ell'}, \ComList_{\ell'}, \Stack_{\ell'}, \Env_{\ell'}\rangle \land \nexists\ComList'\in\mathcal{C}^{*}.(\ComList_{\ell'} = \ComList';I(\ell', F^{+}))\}],\Stab;\true\rangle
	\end{aligned}
}
\]
In the third rule, which can be taken when the iterator is stable and every struct instance is done, the iterator terminates. 
\[
\inference[(\textbf{IterTerm})]{
	\mathit{Done}(\sigma)\\
}{
	\begin{aligned}
		& \langle \highlight{\mathit{aIter}(F^{+})}<\mathit{sc}_1, \Structs, \Stab;\true \rangle\sr \langle \mathit{sc}_1,\Structs,\Stab\rangle
	\end{aligned}
}
\]

With the first rule, we satisfy that the iterator requires a synchronisation before starting its execution. In the second rule, we allow struct instances to start new iterations of the iterator step without synchronisation, and with the third rule, we require a synchronisation before the loop terminates. This satisfies the first requirement.
The commands put on the command lists of struct instances in the first and second rules are generated from the entire subschedule inside the iterator. As commands are run asynchronously, it follows that the second requirement also holds.
The third requirement holds because \Lname syntax does not permit fixpoints or iterators nested in iterators when expanded with the syntax additions given above.
It follows that these extensions correctly implement the iterator as intended into the semantics of \Lname.

We can use the iterator to create better code for \Lname reachability, by changing Listing~\ref{lst: reach}'s schedule to the schedule given below:
\begin{lstlisting}[float=h, caption={New Schedule for Listing~\ref{lst: reach}}, label={lst: reach2}]
	init < Iter(reachability)		\end{lstlisting}

\subsection{Arrays}
Arrays offer a quick way of grouping elements which allows for quick access due to insights in how things are stored in the memory. \Lname does not naturally support dynamically-sized arrays. This is due to the fact that while \Lname does not assume a memory implementation, it does require that all struct instances only have a constant-sized amount of memory to their disposal, as \Lname focuses on small data elements. 

As arrays are widely used in other programming languages and algorithms, supporting arrays may make it easier for people used to those other languages and algorithms to adopt \Lname.
Furthermore, it may make it easier to translate existing implementations to \Lname.
It would allow open up the possibility to leverage the constant time access arrays are known for in \Lname.
Therefore, we propose one way to extend \Lname with support for arrays. 
This would allow people to use arrays with \Lname, and would allow people to leverage the constant time access inherent to them as well.

To adopt arrays in \Lname, we introduce a new, special \emph{array instance}. The idea is that this array instance owns an array and saves the array starting address and size.
As with conventional arrays, while the size may be chosen dynamically upon initialization, it may not be changed after the array has been initialized.
The array is then filled with labels of struct instances, functioning as links to those struct instances. Each element can be accessed through indirect access. Note that this keeps the assumption of constant-sized memory intact for non-array instances, which we prefer over just building in arrays by abolishing the assumption entirely. In this extension, we also consider every entry of an array to be a parameter, to keep the extension simple.

First, we add the new expression $\word{array}(\type{Exp})$ to the syntax of \Lname. This expression reflects the creation of a new array. The expression contained in it represents the size of the array. When type checking, the inner expression should resolve to a natural number larger than $0$, and the whole program should again be well-typed.

When accessing variables, we allow for variable accesses of the type $a[E]$, where $a[E]$ is an array access with $E$ an expression. Accesses can be chained with other variable accesses when necessary. This is reflected by changing the syntax for variables to \begin{align*}
	\type{Var} &::= \type{Var}.\type{Id}\mid\type{Var}\word{[}\type{Exp}\word{]}\mid\type{Id}
	\end{align*} 
We add the type checking requirement that the expression in the square brackets has to resolve to a natural number and that if there is an array access $a[E_1]\cdots[E_n]$ in the syntax, $a$ must be of type $\mathit{Array}_1(\cdots(\mathit{Array}_n(T)))$ for some $T$. We also require updates of arrays to be well-typed. Lastly, we consider ``\texttt{array}'' to be a protected word in the syntax.

We also allow for a new expression $\type{Var}\word{.s}$, which returns the size of the array given for $\type{Var}$. During type checking, we require $\type{Var}$ to refer to an array.

Note that in this extension, we have not included array initializations like ``Int$[2]$ $a$ $:=$ $\{0, 1\}$", as we would like to keep our commands as simple as possible.
We are therefore not inclined to introduce some kind of multiwrite command. This means that the commands generated from a statement like the one given above would not be significantly different from the statements ``Array(Int) $a$ $:=$ array$(2)$; $a[0] \mathop{:=} 0$; $a[1] \mathop{:=} 1$;''. 

Let $\Program$ be an arbitrary program.
We create a new set $\mathit{Types}^a_\Program = \StructTypes_\Program \cup \{\Nat, \Int, \Bool, \String\} \cup \mathit{Types}^1_\Program \cup \mathit{Types}^2_\Program\cup\ldots$, where $$\mathit{Types}^1_\Program = \{\mathit{Array}(\StructType)\mid \StructType\in\StructTypes_\Program\} \cup \{\mathit{Array}(\Nat), \mathit{Array}(\Int), \mathit{Array}(\Bool), \mathit{Array}(\String)\}$$ and $\mathit{Types}^i_\Program = \{ \mathit{Array}(T)\mid T\in\mathit{Types}^{i-1}_\Program\}$ in the semantics. We also define that in the semantics, arrays are labelled using the set $\Labels^a\subset\Labels$.

Arrays are dependent on the memory. To that end, and to keep the struct environment separate from hardware dependent memory, we add a \emph{memory function} $\mathcal{M}:\mathcal{A}\rightharpoonup\mathcal{V}$ to the semantic state, with $\mathcal{A}$ a set of memory addresses and $\mathcal{V}$ the set of all values (with $\mathcal{A}\subset\Nat$). This function models the hardware memory, where the addresses for which it is defined are allocated addresses. We also define that our set $\mathbb{N}$ must include the number $0$. We consider the empty address to be an abstract, non-existent address $\alpha_0$.

We define an \emph{array instance} $a$ as a tuple $\langle \alpha, s\rangle$, which contains the address $\alpha$ at which $a$ starts and the size $s$ of $a$. We save array instances in the struct environment (and extend the struct environment to facilitate this). Lastly, we extend the $\nil$-instances with a $\nil$-instance $\langle \alpha_0, 0\rangle$, with label $\ell^a_0$. We extend $\mathit{defaultVal}$ such that $\mathit{defaultVal}(\texttt{Array}(T)) = \ell^a_0$ for any $T\in \StructTypes^a_\Program$.

To work with arrays, we create the commands $\readA$ and $\writeA$ with $x\in\mathit{Var}_\Program\cup\Par_\Program$ to read from and write to a location of an array. We also add the new commands $\arr(T)$, with $T\in\StructTypes^a_\Program$, and $\asize$ which takes a variable $x\in\mathit{Var}_\Program\cup\mathit{Par}_\Program$ and retrieves the size of the array saved under $x$. 

For the interpretation function, we remove the interpretation function clauses $\interp{x_1.\cdots.x_n}$, $\interp{T\ x := E}$ and $\interp{x_1,\cdots, x_n.x := E}$ and we add the following clauses for variables, with expressions $E$ and $E'$, sequences of variables $X$, $X_1$ and $X_2$ and single variable $x$. Here, $X$ is a variable expression as constructed by the syntax above. We use the notation $X[E]$ to denote that the last $\type{Var}$ element of $X$ is an array access element ``$[E]$'' and we use the notation $X.x$ to denote that the last $\type{Var}$ element of $X$ is a variable access element ``$.x$''. 
\[
\begin{split}			
	\interp{x} &= \push{\this};\readg{x}\\
	\interp{X[E]} &= \interp{X};\interp{E};\readA\\
	\interp{X.x} &= \interp{X};\readg{x}\\
	\interp{x := E'} &= \interp{E'};\push{\this};\writev{x}\\
	\interp{X[E] := E'} &= \interp{E'};\interp{X};\interp{E};\writeA\\
	\interp{X.x := E'} &= \interp{E'};\interp{X};\writev{x}\\
	\interp{T\ X := E} &= \interp{X := E}
\end{split}
\]
We also add the following clauses to deal with array creation and requesting the size of an array:
\[
\begin{split}
	\interp{\word{array}(E)} &= \interp{E};\arr(T)\\
	\interp{X.\texttt{s}} &= \interp{X};\asize\\	
\end{split}
\]

We then add new operational semantics rules to reflect reading from an array (\textbf{ComRdA}), writing to an array (\textbf{ComWrA}), getting the size of the array (\textbf{ComArrs}) and creating an array (\textbf{ComArr}). We also adapt the other operational semantics to include $\mathcal{M}$ into the state.
The rule \textbf{ComRdA} given below is based on the rule \textbf{ComRd}, with changes to access the memory instead of variable environments and to use the command \textbf{RdA}. Note that the location is realized as a displacement from the memory address of the array. Changes from \textbf{ComRd} are highlighted.

\[
\inference[(\textbf{ComRdA})]{
	\Structs(\ell) = \langle \StructType, \highlight{\readA;\CList, \Stack;\ell';v}, \Env\rangle\\
	\highlight{\Structs(\ell') = \langle \alpha, s\rangle} & v < s & \mathcal{M}(\alpha + v) \text{ is defined} 
}{
	\langle \Sched, \Structs, \highlight{\mathcal{M},} \Stab \rangle\sr\langle \Sched, \Structs[\ell \mapsto \langle \StructType, \CList, \Stack;\highlight{\mathcal{M}(\alpha + v),} \Env\rangle], \highlight{\mathcal{M},} \Stab\rangle
}
\]
The rule \textbf{ComWrA} given below is based on \textbf{ComWr}. Like \textbf{ComWr}, it keeps track of whether it has to update the stability stack if the array is changed and the array is a parameter, though this means that in the end, three entries of the struct environment are integral to the functioning of this command.
\[
\inference[(\textbf{ComWrA})]{
	\Structs(\ell) = \langle \StructType, \highlight{\writeA;\CList, \Stack;v_1;\ell';v_2},
	\Env\rangle &
	\highlight{\Structs(\ell') = \langle \alpha, s\rangle} \\
	\highlight{\ell'} \notin \Labels^{0} & v_2 < s &
	\mathit{su} = (\highlight{\mathcal{M}(\alpha + v_2) = v_1})\\
}{
	\begin{aligned}
		\langle \Sched, \Structs, \highlight{\mathcal{M},} \Stab \rangle\Rightarrow \langle \Sched, \Structs&[\ell \mapsto \langle \StructType, \CList, \Stack, \Env\rangle],\\
		&\highlight{\mathcal{M}[(\alpha+v_2) \mapsto v_1],} \Stab_1 \land \mathit{su};\dots;\Stab_{|\Stab|} \land \mathit{su}\rangle
	\end{aligned}
}
\]
The rule \textbf{ComAsize} given below is also based on \textbf{ComRd}, and returns the size of an array. The differences with \textbf{ComRd} are highlighted.
\[
\inference[(\textbf{ComAsize})]{
	\Structs(\ell) = \langle \StructType, \highlight{\asize;\CList, \Stack;\ell'}, \Env\rangle\\
	\highlight{\Structs(\ell') = \langle \alpha, s\rangle}
}{
	\langle \Sched, \Structs, \highlight{\mathcal{M},} \Stab \rangle\Rightarrow\langle \Sched, \Structs[\ell \mapsto \langle \StructType, \CList, \Stack;\highlight{s}, \Env\rangle], \highlight{\mathcal{M},}\Stab\rangle
}
\]
The last new rule is \textbf{ComArr}, which creates a new array. This rule is based on \textbf{ComCons}, and differences with that rule are highlighted. Note that $\ell'$ is a fresh label in this rule, signified by $\Structs(\ell') = \bot$.
\[
\inference[(\textbf{ComArr})]{
	\Structs(\ell) = \langle \StructType, \highlight{\arr(T);\CList, \Stack;s}, \Env\rangle\\
	\highlight{s\in\mathbb{N}} & \highlight{s\geq 1} & \highlight{\text{$\mathcal{M}$ is not defined for $\alpha, \ldots, (\alpha+s)$}} &
	\Structs(\ell') = \bot
}{
	\begin{aligned}
		\langle \Sched, \Structs, \highlight{\mathcal{M},} \Stab \rangle\Rightarrow&\langle \Sched, \Structs[\{\ell \mapsto \langle \StructType, \CList, \Stack;\ell', \Env\rangle, \highlight{\ell'\mapsto \langle \alpha, s\rangle}\}],\\
		& \highlight{\mathcal{M}[\{\alpha_i\mapsto \mathit{DefaultVal}(T) \mid \alpha_i\in \alpha, \ldots, (\alpha+s)\}],\Stab\rangle}
	\end{aligned}
}
\]

With the rule \textbf{ComArr}, we can now make array instances, allocating them a sequence of unallocated addresses. We can also get the size of the array through the addition \texttt{.s} behind an array variable \texttt{x}, which is then handled by the transition \textbf{ComAsize}, induced by the \textbf{asize} command. We can lastly read and write from those arrays using conventional block-parenthesis notation, and the locations we read and write to can be determined dynamically. This satisfies our expectations for an array. Note that this extension does not include the destruction and deallocation of arrays.

With this extension, we are able to reimplement reachability to use arrays, as shown in Listing~\ref{lst: reacharr} (based on Listing~\ref{lst: reach}). As you can see in the Listing, arrays make the system smaller, but as loops are contained in the schedule, one cannot employ a for-loop to walk through all array instances in a single step, and the code must be adapted for this.

\begin{lstlisting}[float=t, caption={\Lname code for a reachability program on a small graph using arrays}, label={lst: reacharr}]
struct Node (reach: Bool, succ: Array(Node), done: Nat) {
	reachability {
		if (reach = true) then { 
			if (done < succ.s){
				succ[done].reach := true;
				done := done + 1;
			}
		}
	}
	
	init {
		Node node1 := Node(true, array(2), 0);
		Node node2 := Node(false, array(1), 0);
		Node node3 := Node(false, array(1), 0);
		Node node4 := Node(false, null, 0);
		
		node1.succ[0] := node2;
		node1.succ[1] := node3;
		node2.succ[0] := node3;
		node3.succ[0] := node4;
	} 
}
init < Fix(reachability)\end{lstlisting}

\subsection{Discussion}\label{sec: discussion}
In this section, we presented three extensions for \Lname to allow for more streamlined and conventional programming in \Lname. These extensions are reasonably simple and therefore follow \Lname's goals of being a small programming language. Of these extensions, we think that the extension for parameter-specific fixpoints has the most merit, as it simplifies \Lname code and allows for more concise programs, which plays into \Lname's goals. We also estimate that it should be relatively easy to implement this extension and not have too much overhead in the execution time of \Lname programs, as part of the stability function can probably be derived at compile time.

The iterator increases the amount of parallelism in the schedule and therefore follows the goals of \Lname to have as much parallelism as possible, but for this extension in particular it would be important to test the extension to find out the impact it has on the speed of execution of \Lname iterators compared to fixpoints before adding it.
If it does not have too much of a negative impact on execution speed, this would be a good extension for \Lname.

The array extension is directly and theoretically useful to convert older parallel algorithms to \Lname and because many programmers will already be used to them. 
However, the impact on the efficiency of \Lname will have to be tested. 
Additionally, using arrays may lead to less parallel code: traversing an array is a sequential process, which can lead to a delayed application of new information to elements in the array. For example, where in Listing~\ref{lst: reach} the edges propagated reachability information obtained the next step without extra delay, the first Node in Listing~\ref{lst: reacharr} will only try to set Node 3 to reachable in the second step. 
As separating steps from loops is a fundamental design choice of \Lname, we furthermore do not think it is a good idea to solve this problem by introducing limited loops in the steps.
However, even if it were solved that way, such a simple array traversal loop would make step execution less uniform and more sequential, as the struct instance with the longest array walked through now has to finish before the next step can be executed, and it could be that this array is significantly larger than the other arrays.	
Lastly, our current array extension considers all elements of any array to be parameters, which is a problem for fixpoints if a step uses a local array. 
To solve this, one will most likely have to start counting per array which parameters it is the value of, leading to more complicated extensions and read and write procedures.
We therefore estimate that to make an array-like structure work in \Lname, more extensive extensions outside of the scope of this paper would be required, and would require a consideration of how arrays should be defined beyond the standard to work best in \Lname.

We have kept ourselves to extensions that increase the practical expressivity in some way. 
Two important extensions that this paper did not cover are an extension for \Lname to do garbage collection and also remove struct instances, and an extension to include inheritance in \Lname and to be able to import and create packages and templates of structs in some way.
As both of these extensions are significant and will most likely impact the basis of \Lname (for example, garbage collection and destroying struct instances will most likely impact the corollaries used in the Turing completeness proof), they will require a careful approach to implement them and we therefore deemed them outside of the scope of this paper. 
A similarly significant extension that does impact expressivity is to have an asynchronous separator which does separate steps but does not require synchronisation in between, which will at the very least impact the structured approach of \Lname and requires careful design outside of the scope of this paper. These three extensions are thus left for future work.

For this paper, we have not implemented and tested these extensions. We consider this the limitation of the currently expressed extensions, as the practical usability of \Lname is important, despite it not being our foremost concern. Additionally, to formalize the exact effects of the extensions on \Lname executions, we could develop a formal proof system for \Lname, extend it for each of the extensions and prove it sound.

\section{Conclusion}\label{sec: Con}
In this paper, we have proven \Lname Turing complete by giving a method to implement any Turing machine in \Lname. Additionally, we have explored the practical expressivity of \Lname by seeking out the limits of \Lname's syntax and presented extensions to \Lname to overcome these limits.
With this paper and the papers by Leemrijse~\etal~\cite{leemrijse-formalisation-2025, leemrijse2023}, we have made a case that \Lname is practically feasible, general-purpose and reasonably adaptable to specific purposes. 
This sets the basis for \Lname to be generally applicable.

However, more work can be done to streamline \Lname's use enough to make it a more attractive option for applications. First of all, we should make some theoretical applications that capitalize on \Lname's theoretical foundation, like a proof system for \Lname. On the practical side, there are at least two larger extensions that \Lname could benefit from, garbage collection and package management, which could be incorporated into \Lname. Additionally, we can continue our search for useful \Lname extensions and optimize the practical expressivity of \Lname more. 

\section*{Acknowledgment}
\noindent The authors would like to thank Gijs Leemrijse for his work on implementing algorithms in AuDaLa.
\bibliographystyle{alphaurl}
\bibliography{ExpressivityAuDaLa}
\end{document}